\documentclass[leqno]{article}
\usepackage{amsthm}
\usepackage{amsfonts}
\usepackage{amssymb}
\usepackage{amsmath}
\usepackage{mathtools}
\usepackage{hyperref}
\usepackage[capitalise]{cleveref}

\newtheorem{theorem}{Theorem}[section]
\newtheorem{lemma}[theorem]{Lemma}
\newtheorem{proposition}[theorem]{Proposition}
\newtheorem{corollary}[theorem]{Corollary}
\theoremstyle{definition}
\newtheorem{definition}[theorem]{Definition}

\newcommand{\A}{\mathcal A}
\newcommand{\B}{\mathcal B}
\newcommand{\M}{\mathcal M}
\newcommand{\CC}{\mathbb C}
\newcommand{\PP}{\mathbb P}
\newcommand{\ZZ}{\mathbb Z}
\newcommand{\EV}{\quad \Longleftrightarrow \quad}
\newcommand{\iso}{\cong}
\renewcommand{\:}{\colon}
\newcommand{\id}{\mathrm{id}}
\newcommand{\tr}{\mathrm{tr}}
\newcommand{\tensor}{\mathbin{\otimes}}
\newcommand{\<}{\langle}
\renewcommand{\>}{\rangle}

\allowdisplaybreaks

\title{Localizing quantum information}
\author{Fahimeh Bayeh and Andre Kornell}
\date{}

\newcommand{\Addresses}{{
  \bigskip
  \footnotesize

  \textsc{Department of Mathematics and Statistics, Dalhousie University}
  \par\nopagebreak
  \textsc{Halifax, NS B3H 4R2, Canada}
  \par\nopagebreak
  \textit{E-mail address}: \texttt{fahimeh.bayeh@dal.ca}

  \medskip

  \textsc{Department of Mathematical Sciences, New Mexico State University}
  \par\nopagebreak
  \textsc{Las Cruces, NM 88003, United States of America}
  \par\nopagebreak
  \textit{E-mail address}: \texttt{kornell@nmsu.edu}
}}

\begin{document}

\maketitle

\begin{abstract}
The idea that information can be localized is pervasive in physics. In thermodynamics, entropy flows from one reservoir to another, and in quantum gravity, the entropy content of some spatial regions is bounded. In quantum information theory, information is transmitted from one place to another, and entanglement can provide an advantage when there are constraints on such transmission.

Shannon entropy quantifies localizable information in the sense that marginalization defines an outer measure on the set of parts of a classical multipartite system. In contrast, von Neumann entropy does not quantify localizable information in this sense. However, a variant of von Neumann entropy, which originates in noncommutative geometry, does.

We prove that this adjusted von Neumann entropy is the minimum quantum entropy that quantifies localizable information. We also prove that this is the unique quantum entropy that characterizes Bell states in terms of redundant information, directly generalizing the classical case. We work with finite-dimensional $C^*$-algebras throughout, modeling finite quantum systems that may have superselection sectors.

\end{abstract}

\vspace{.5cm}

\noindent {\em Key words: von Neumann entropy, Shannon entropy, Bell state, EPR state, finite-dimensional $C^*$-algebra. }

\vspace{.5cm}

\noindent MSC 2020:
81P17, 
81P40, 
46L89. 

\section{Introduction}

\subsection{Measures of information entropy}

The topic of this article is the quantum generalization of information entropy. For brevity and clarity, we interpret information entropy to be the quantity of information that the observer lacks about the microstate of a physical system, without endorsing this interpretation in physics. For simplicity, we consider only finite physical systems. Witten's review article \cite{Witten2020} is our basic reference.

When the information that the observer has about the microstate of a fully classical system is given by a probability distribution $p \in \CC^n$ on the set of microstates, the observer's information entropy is given by the familiar formula
\begin{equation}\label{equation 1}
S(p) = -\sum_{i=1}^n p_i \log p_i.
\end{equation}
We refer to the probability distribution as the state of the system and to this information entropy as the Gibbs entropy \cite{Gibbs1902} or \emph{Shannon entropy} \cite{Shannon1948} of the state.

The standard quantum analogue of classical information entropy was given by von Neumann. When the information of the observer about the microstate of a fully quantum system is given by a density matrix $\rho \in M_n(\CC)$, the observer's information entropy is given by the formula
\begin{equation}\label{equation 2}
S(\rho) = -\sum_{i=1}^n \lambda_i \log \lambda_i,
\end{equation}
where $\{\lambda_1, \ldots, \lambda_n\}$ is the spectrum of $\rho$ with multiplicity. We refer to the density matrix as the state of the system and to this information entropy as the \emph{von Neumann entropy} of the state \cite{vonNeumann1932}.

Shannon entropy and von Neumann entropy are remarkably similar in many formal and practical respects. Perhaps the starkest difference between them is that Shannon entropy is localizable but von Neumann entropy is not. Informally, we mean that the Shannon entropy of a bipartite state quantifies information that can be viewed as distributed between the two parts of the fully classical system, but von Neumann entropy cannot be interpreted in this way. Formally, we mean that Shannon entropy is monotonic, but von Neumann entropy is not.

Let $\A$ and $\B$ be two parts of a bipartite fully classical system in state $p$. Shannon entropy is \emph{monotonic} \cite[eq.~2.15]{Witten2020} in the sense that
\begin{equation*}
S(p_\A) \leq S(p),
\end{equation*}
where $p_\A$ is the marginal state of $\A$. Evidently, $S(p_\A)$ quantifies the information that the observer lacks about $\A$, and $S(p_\A) \leq S(p)$ because the observer lacks at least as much information about the bipartite system as a whole. Less precisely, the bipartite system contains at least as much information as the subsystem.

For contrast, consider the standard Bell scenario \cite{Bell1964}, in which a bipartite fully quantum system is in the state
$$
\rho = \frac 1 2 (|00\> + |11\>)(\<00| + \<11|) \in M_2(\CC) \tensor M_2(\CC) = M_4(\CC).
$$
In this case, $S(\rho) = 0$, but $S(\rho_\A) = \log 2$, so $S(\rho_\A) \not \leq S(\rho)$; see also \cite[eq.~3.43]{Witten2020}. Thus, von Neumann entropy does not quantify information that can be viewed as distributed between the two parts of this fully quantum system. There would be more information in a part of the system than in the system as a whole!

We observe that despite the striking similarity between equations \eqref{equation 1} and \eqref{equation 2}, von Neumann entropy is not the clear quantum analogue of Shannon entropy. Instead, this article investigates \emph{adjusted von Neumann entropy}
$$
\tilde S(\rho) = S(\rho) + \log n,
$$
where $\rho \in M_n(\CC)$ \cite{Kornell2025}. We show that adjusted von Neumann entropy is
\begin{enumerate}
\item the smallest quantum entropy such that $\tilde S(\rho_\A) \leq \tilde S(\rho)$ for all bipartite $\rho$,
\item the unique quantum entropy such that $\tilde S(\rho_\A) = \tilde S(\rho)$ for all Bell $\rho$.
\end{enumerate}

Before we gloss these results, we address the fact that adjusted von Neumann entropy $\tilde S$ differs from von Neumann entropy only by an additive constant. Thus, adjusted von Neumann entropy differs meaningfully from von Neumann entropy only when comparing states on fully quantum systems of different dimensions. This occurs naturally when we attempt to localize the information content of a quantum system, as we have done in the preceding discussion. This also occurs naturally when we consider quantum systems with superselection sectors of variable dimension. For this reason, adjusted von Neumann entropy is more properly defined in the setting of finite-dimensional $C^*$-algebras.

\subsection{Finite physical systems}

Finite systems are typically modeled by finite-dimensional $C^*$-algebras. In the case of a fully classical system with exactly $n$ pure states, that $C^*$-algebra is isomorphic to $\CC^n$. In the case of a fully quantum system with $n$ states in each maximal orthogonal set of pure states, that $C^*$-algebra is isomorphic to $\M_n \coloneqq M_n(\CC)$. In general, a finite-dimensional $C^*$-algebra $\A$ is of the form
$$
\A \iso \M_{n_1} \oplus \cdots \oplus \M_{n_s},
$$
and the factors $\M_{n_i}$, for $1 \leq i \leq s$, correspond to the superselection sectors of the system.

A state of such a system is modeled by a positive functional $\omega\: \A \to \CC$ with $\omega(1) = 1$. Such functionals are in one-to-one correspondence with density operators $\rho \in \A$, i.e., positive operators such that $\tr(\rho) = 1$. Density operators are widely known in quantum information theory, so we define a \emph{state} in $\A$ to be a density operator $\rho \in \A$.

The \emph{adjusted von Neumann entropy} of a state $\rho \in \A$ is defined to be
$$
\tilde S(\A, \rho) = - \tr(\rho \log \rho) + \tr (\rho \log \zeta_\A),
$$
where
$$
\zeta_\A = n_1 1_{n_1} \oplus \cdots \oplus n_s 1_{n_s}
$$
and $1_n$ denotes the identity operator in $\M_n$ for each positive integer $n$. The trace of an operator $a \in \A$ is its trace in $\M_n$, where $n = n_1 + \cdots + n_s$. Thus, when $a_i \in \M_{n_i}$ for each $1 \leq i \leq s$,
$$
\tr(a_1 \oplus \cdots \oplus a_s) = \tr(a_1) + \cdots + \tr(a_s).
$$
The trace in this paper is not normalized, so $\tr(1_n) = n$.

The adjusted von Neumann entropy is a \emph{quantum entropy} in the sense that
\begin{enumerate}
\item $\tilde S(\A, \rho) = \tilde S (\B, \sigma)$ if there is a $\dagger$-isomorphism $\Phi\: \A \to \B$ with $\Phi(\rho) = \sigma$,
\item $\tilde S(\A \oplus \B, \rho \oplus 0) = \tilde S(\A, \rho)$,
\item $\tilde S (\A \tensor \B, \rho \tensor \sigma) = \tilde S (\A, \rho) + \tilde S(\B, \sigma)$,
\item for all probabilistic mixtures $\rho = p_1 \rho_1 + p_2 \rho_2$ with $\rho_1$ orthogonal to $\rho_2$,
$$\tilde S(\A, \rho) = p_1 \tilde S(\A, \rho_1) + p_2 \tilde S(\A, \rho_2) + S(p);$$
\end{enumerate}
this is \cref{3.10}. The first axiom expresses that entropy is an isomorphism invariant of states. The second axiom expresses that impossible alternatives do not affect entropy. The third axiom expresses that entropy is additive for uncorrelated bipartite states. The fourth axiom is the familiar grouping axiom, which expresses that the entropy of a mixture of distinguishable states is equal to the entropy of the selection plus the expected entropy of the selected state. Note that Shannon entropy and von Neumann entropy both satisfy these axioms, albeit in the setting of commutative and simple finite-dimensional $C^*$-algebras, respectively.

The third axiom expresses additivity because the tensor product of finite-dimensional $C^*$-algebras models the composition of spatially separated systems. This tensor product may be constructed in a number of equivalent ways, e.g., as the tensor product of algebras over $\CC$. Thus, the state of a bipartite system is a state $\rho \in \A \tensor \B$, and its marginal states are $\rho_\A \coloneqq \tr_\B(\rho) \coloneqq (\id \tensor \tr)(\rho) \in \A$ and $\rho_\B \coloneqq \tr_\A(\rho) \coloneqq (\tr \tensor \id)(\rho) \in \B$. We can now gloss our first main result in the following way.

\begin{theorem}[also~\cref{3.19}]\label{1.1}
Adjusted von Neumann entropy $\tilde S$ is the minimum quantum entropy that is monotonic in the sense that $\tilde S(\rho_\A) \leq \tilde S(\rho)$ for all states $\rho \in \A \tensor \B$.
\end{theorem}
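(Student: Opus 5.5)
Two things need to be established: (a) $\tilde S$ is a monotonic quantum entropy, and (b) any monotonic quantum entropy $H$ satisfies $H(\A,\rho)\ge \tilde S(\A,\rho)$ for all states. The plan is to reduce every quantum entropy to a small amount of data on the matrix blocks. Using the grouping axiom, the isomorphism axiom, and the axiom $H(\A\oplus\B,\rho\oplus 0)=H(\A,\rho)$, a state $\rho=\bigoplus_i\rho_i$ in $\A=\bigoplus_i \M_{n_i}$ can be diagonalized within each block. Repeated grouping then gives
\[
H(\A,\rho)=S(\lambda)+\sum_{j}\lambda_j\, h(n_{i(j)}),
\]
where $\lambda$ is the spectrum of $\rho$ (each eigenvalue attached to a rank-one projection in block $i(j)$), and $h(n)\coloneqq H(\M_n,\text{pure state})$. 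By unitary invariance, $h(n)$ is independent of the pure state. Additivity on tensor products gives $h(mn)=h(m)+h(n)$, and $h(1)=0$ because $H(\CC,1)=H(\CC\tensor\CC,1)=2H(\CC,1)$. Taking $h(n)=\log n$ recovers $\tilde S$, consistent with \cref{3.10}. So every quantum entropy is determined by a completely multiplicative function $h$, and the task is to show that monotonicity forces $h(n)\ge \log n$.

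For the lower bound, I will apply monotonicity to well-chosen bipartite states. The Bell-type maximally entangled pure state $\rho\in\M_n\tensor\M_n$ has $H(\rho)=h(n^2)=2h(n)$, and its marginal is $1_n/n$, with $H=\log n+h(n)$. Monotonicity gives $\log n+h(n)\le 2h(n)$, hence $h(n)\ge\log n$. Consequently $H(\A,\rho)\ge S(\lambda)+\sum_j\lambda_j\log n_{i(j)}=\tilde S(\A,\rho)$ for every state, which is minimality. This step is short once the reduction formula is in hand.

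The real work, and the main obstacle, is showing that $\tilde S$ itself is monotonic for arbitrary finite-dimensional $\A$ and $\B$, not just full matrix algebras. For the matrix-algebra case $\A=\M_m$, $\B=\M_n$, the claim reads $S(\rho_\A)+\log m\le S(\rho)+\log mn$, i.e.\ $S(\rho_\A)\le S(\rho)+\log n$. This follows from subadditivity together with the Araki–Lieb inequality $S(\rho_\A)-S(\rho_\B)\le S(\rho)$ and $S(\rho_\B)\le\log n$.

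For general direct sums I will rewrite $\tilde S(\A,\rho)=-\tr(\rho\log\rho)+\tr(\rho\log\zeta_\A)$ as a relative entropy, $\tilde S(\A,\rho)=-D(\rho\,\|\,\zeta_\A)$. Here $\zeta_\A$ is the positive element with $\tr(\zeta_\A^{-1}\,\cdot)$ appropriately normalized; equivalently, $\tilde S$ is minus the relative entropy with respect to the canonical trace $\tau_\A(a)=\sum_i n_i\tr(a_i)$, which is the normalization with $\tau(\text{minimal projection})\cdot$ weighting giving $\zeta$. Then the partial trace $\tr_\B\:\A\tensor\B\to\A$ is computed, and I check how the canonical weights transform: $\zeta_{\A\tensor\B}=\zeta_\A\tensor\zeta_\B$, and the conditional expectation compatible with $\tau$ differs from $\tr_\B$ by the factor $\zeta_\B$. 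Monotonicity should then follow from data processing (Uhlmann monotonicity of relative entropy) under the map $x\mapsto \tr_\B\bigl(x(1\tensor\zeta_\B^{-1})\bigr)$ or a similar trace-scaled partial trace, together with the estimate $\zeta_\B\ge 1$. I expect the bookkeeping of these weights to be the delicate step. If it fails, the fallback is to decompose $\rho$ by central projections of $\A\tensor\B$ using the grouping formula and reduce to the matrix-block inequality above, with concavity of $S$ absorbing the mixing terms.
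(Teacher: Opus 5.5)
Your minimality half is the paper's argument. \cref{3.15} reduces every quantum entropy to the completely additive function $n \mapsto H(\M_n, E_{11})$, and \cref{3.19} uses exactly your Bell-state inequality $\log n + h(n) \leq 2h(n)$. Your factor case of monotonicity via Araki--Lieb is also fine; the paper itself just cites \cite[Theorem~1.1]{Kornell2025} in \cref{3.4}.

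The gap is in your primary route for general $\A$ and $\B$: it points the wrong way. Write $\tilde S(\A,\rho) = -D(\rho \,\|\, \zeta_\A)$. Then monotonicity says $D(\rho_\A \,\|\, \zeta_\A) \geq D(\rho \,\|\, \zeta_{\A \tensor \B})$, so relative entropy to the canonical weight must \emph{increase} when $\B$ is discarded. Data processing under any map $\Phi$ applied to $\rho$ only gives $D(\Phi(\rho) \,\|\, \Phi(\zeta_{\A\tensor\B})) \leq D(\rho \,\|\, \zeta_{\A\tensor\B})$. That is an \emph{upper} bound on $\tilde S(\rho)$; for $\Phi = \tr_\B$ it reads $\tilde S(\rho) \leq \tilde S(\rho_\A) + \log \dim \B$. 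Rescaling by $\zeta_\B^{-1}$ and using $\zeta_\B \geq 1$ cannot reverse the direction.

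Your fallback has the right structure but invokes the wrong inequality. Write $\rho = \sum_{i,j} p_{ij}\rho_{ij}$ with $\rho_{ij}$ a state of the factor $\M_{n_i}\tensor\M_{m_j}$. The $i$th block of $\rho_\A$ is then the generally non-orthogonal mixture $\sum_j p_{ij}\tr_\B(\rho_{ij})$. You need to bound its entropy from \emph{above}, and concavity bounds it from below. The tool you need is \cref{3.2}, in the form \cref{3.5}: $\tilde S(\sum_j q_j \tau_j) \leq \sum_j q_j \tilde S(\tau_j) - \sum_j q_j \log q_j$.

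With that inequality, three further steps give $\tilde S(\rho_\A) \leq \tilde S(\rho)$:
\begin{enumerate}
\item grouping via \cref{3.11} on both sides,
\item the chain rule for Shannon entropy,
\item blockwise Araki--Lieb.
\end{enumerate}
This is the same chain as in the proof of \cref{3.6}, and it yields a self-contained proof of \cref{3.4} in place of the citation.

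If you want to keep a relative-entropy argument, the working mechanism is an operator inequality rather than data processing. One has $\tilde S(\rho) - \tilde S(\rho_\A) = -D(\rho \,\|\, \rho_\A \tensor \zeta_\B)$. Twirling each block $\M_{m_j}$ of $\B$ with its $m_j^2$ clock-and-shift unitaries shows $\rho \leq \rho_\A \tensor \zeta_\B$. Operator monotonicity of $\log$ then gives $D \leq 0$.
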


The monotonicity of a quantum entropy can equivalently be expressed in terms of the marginal state $\rho_\B$ as a consequence of the first axiom.

\subsection{Bell states}

The most important property of adjusted von Neumann entropy, which characterizes it among other quantum entropies, is its account of shared information in Bell scenarios. In the strictest sense, the Bell state is the two-qubit state
$$
\beta_2 = \frac 1 2 \sum_{x,y \in \{0,1\}} |x x\>\<y y| \in \M_2 \tensor \M_2.
$$
This state has a number of well-known generalizations, e.g., to the setting of two qudits. In the present article, we introduce and examine a basis-independent generalization of this notion that distills a key operational feature of the original Bell state.

The argument of Bell \cite{Bell1964} and the argument of Einstein, Podolsky, and Rosen that preceded it \cite{EinsteinPodolskyRosen1935} leverage the fact that, despite their randomness, measurement outcomes in one part of the bipartite system $\M_2 \tensor \M_2$ can be determined using an appropriate measurement in the other part. If, following consensus, we reject the transfer of information from one part of the system to the other as a result of these measurements, then we instead accept that whatever information provides the measurement outcomes in one part of the system also exists in the other part. Adjusted von Neumann entropy quantifies exactly this information.

We define a \emph{Bell state} to be a bipartite state $\rho \in \A \tensor \B$ such that every Boolean observable in $\A$ is perfectly correlated with some Boolean observable in $\B$, and vice versa. A \emph{Boolean observable} on a system $\A$ is a $\{0,1\}$-valued observable on $\A$, i.e., a self-adjoint operator $P \in \A$ whose spectrum is in $\{0, 1\}$. When $\A$ and $\B$ are both qubits, then a bipartite state $\rho \in \A \tensor \B$ is Bell iff it is equivalent to the Bell state $\beta_2$; see \cref{2.7}. We can now gloss our second main result in the following way.

\begin{theorem}[also~\cref{3.18}]\label{1.2}
Adjusted von Neumann entropy $\tilde S$ is the unique quantum entropy such that, for all Bell states $\rho \in \A \tensor \B$,
$$\tilde S(\rho_\A) = \tilde S(\rho) = \tilde S (\rho_\B).$$
\end{theorem}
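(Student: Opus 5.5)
Two things have to be shown: that $\tilde S$ has the Bell property, and that any quantum entropy $H$ with this property equals $\tilde S$.

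\textbf{Existence.} First I will classify Bell states. Perfect correlation of every projection in $\A$ with some projection in $\B$, and conversely, should force the following structure:
\begin{itemize}
\item a $\dagger$-isomorphism between the supports;
\item up to isomorphism and discarding zero summands (axioms 1 and 2), $\A \iso \B \iso \bigoplus_i \M_{n_i}$;
\item $\rho = \bigoplus_i p_i\, \beta_{n_i}$, where $\beta_n = \frac1n\sum_{x,y}|xx\>\<yy|$ is the maximally entangled state in $\M_{n}\tensor\M_{n}$, sitting in the diagonal blocks $\M_{n_i}\tensor\M_{n_i}$.
\end{itemize}
This generalizes \cref{2.7}. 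I expect to prove it by taking the perfect-correlation map $P\mapsto Q$ and showing it is an orthomorphism of projection lattices that extends to a (anti-)$\dagger$-isomorphism. Once I compose with a transpose, it becomes a $\dagger$-isomorphism.

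Given this form, the computation is direct:
\begin{itemize}
\item $\rho_\A=\bigoplus_i (p_i/n_i)1_{n_i}$, so $\tilde S(\rho_\A)=S(p)+\sum_i p_i\log n_i-\sum_i p_i\log n_i+\sum_i p_i\log n_i=S(p)+\sum_i p_i\log n_i$.
\item The block $\M_{n_i}\tensor\M_{n_i}\iso\M_{n_i^2}$ carries a pure state, so $\tilde S(\rho)=S(p)+\sum_i p_i\log n_i^2 - \sum_i p_i\log n_i$.
\end{itemize}
Here I must be careful that $\zeta$ of the summand $\M_{n_i}\tensor\M_{n_i}$ contributes $\log n_i^2$. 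Also, $\rho$ lives in $\A\tensor\B$, which contains off-diagonal blocks $\M_{n_i}\tensor\M_{n_j}$ where $\rho$ vanishes, so axiom 2 handles those. With these points the two sides agree.

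\textbf{Uniqueness.} Let $H$ be a quantum entropy with the Bell property. By the grouping axiom applied to spectral decompositions, together with axioms 1 and 2, $H$ is determined by its values $h(n):=H(\M_n,|0\>\<0|)$ on pure states. Indeed, for $\rho=\bigoplus_i\rho_i$ with eigenvalues $\lambda_{ij}$ one gets $H(\rho)=S(\lambda)+\sum_{i}(\sum_j\lambda_{ij})h(n_i)$. Additivity gives $h(mn)=h(m)+h(n)$, and $h(1)=0$. Applying the Bell condition to $\beta_n$ gives $H(\rho_\A)=\log n+h(n)$ and $H(\beta_n)=h(n^2)=2h(n)$. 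Hence $h(n)=\log n$, which is exactly the value for $\tilde S$, because $\tilde S(\M_n,\text{pure})=\log n$.

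The main obstacle is the classification of Bell states. In particular, I need to show that perfect correlation of all Boolean observables forces the marginal to be proportional to the identity within each block, and forces the blocks of $\A$ and $\B$ to match in dimension. My first attempt will use rank-one projections: a rank-one $P$ correlated with $Q$ forces $\tr(\rho(P\tensor 1))=\tr(\rho(1\tensor Q))$. Unitary invariance within a block then should give equal weights. For uniqueness, only the easy direction that $\beta_n$ is Bell is needed, and this holds via $(P\tensor 1)\beta_n=(1\tensor P^T)\beta_n$.
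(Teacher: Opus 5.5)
Your uniqueness half is the paper's argument: reduce $H$ to $h(n)=H(\M_n,E_{11})$ as in \cref{3.12,3.15}, then evaluate the Bell condition on $\beta_n$ to get $2h(n)=\log n+h(n)$. For existence your route differs. The paper never classifies two-sided Bell states; it cites \cref{3.8}, direction $(2)\Rightarrow(1)$, which uses only the one-sided Bell property. You instead want a normal form $\rho\iso\bigoplus_i p_i\beta_{n_i}$ for arbitrary finite-dimensional $\A,\B$, followed by a direct computation. That normal form is true; it is a non-simple version of \cref{2.7}, which the paper does not state. But you only sketch it, and it is the whole content of this half.

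It follows quickly from Section~2. Diagonalize $\rho$ into orthogonal pure states. By \cref{2.3} each is Bell, and each lies in a single block $\M_n\tensor\M_m$ and is Bell there (as in the proof of \cref{3.8}). So by \cref{2.7}, $n=m$, the component is maximally entangled, and both marginals are $\frac1n 1_n$ with full support on their blocks. \cref{2.5}, $(1)\Rightarrow(2)$, and its mirror image then force distinct components into distinct blocks of $\A$ and of $\B$, and a local unitary gives your form.

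If you keep your own sketch, it needs repairs. First, $Q$ is unique only after compressing by the support of $\rho_\B$. More seriously, an orthoisomorphism of projection lattices need not extend to a Jordan isomorphism when $\M_2$ summands are present (Dye's theorem excludes type $I_2$), and $\M_2$ is exactly the qubit case. Linearity has to come from $\rho$ itself: extend $P\mapsto Q$ to self-adjoint $A$ via spectral decompositions, so that $(A\tensor 1_\B)\rho=(1_\A\tensor\phi(A))\rho$. The ``unitary invariance'' step is unnecessary. Applying $\tr_\B$ to $(P\tensor 1_\B)\rho=(1_\A\tensor Q)\rho$ gives $P\rho_\A=\tr_\B((1_\A\tensor Q)\rho(1_\A\tensor Q))\geq 0$, so $\rho_\A$ commutes with every projection and is central. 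Matching the block dimensions is what genuinely needs \cref{2.4}-type reasoning.

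Both of your entropy computations are wrong, by the same amount, so the errors cancel. There is no $-\sum_i p_i\log n_i$ term. We have $S(\rho_\A)=S(p)+\sum_i p_i\log n_i$ and $\tr(\rho_\A\log\zeta_\A)=\sum_i p_i\log n_i$. Also $S(\rho)=S(p)$ and $\tr(\rho\log\zeta_{\A\tensor\B})=\sum_i p_i\log n_i^2$. So both sides equal $S(p)+2\sum_i p_i\log n_i$. As written, your formulas give $\log n$ for $\beta_n$, which contradicts your own uniqueness computation $H(\rho_\A)=\log n+h(n)=2\log n$. Finally, you should state explicitly that $\tilde S(\rho_\B)=\tilde S(\rho)$; it follows from the symmetry of the normal form.
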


Thus, adjusted von Neumann entropy is the unique quantum entropy such that a bipartite system in a Bell state contains as much information as each of its parts. Furthermore, it characterizes Bell states by this property.

\begin{theorem}[also~\cref{3.9}]\label{1.3}
A state $\rho \in \A \tensor \B$ is a Bell state iff $$\tilde S(\rho_\A) = \tilde S(\rho) = \tilde S (\rho_\B).$$
\end{theorem}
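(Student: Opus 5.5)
The plan is to route both directions through a structural description of the states that attain equality. I will show that $\tilde S(\rho_\A) \leq \tilde S(\rho)$ always holds, and that equality on both sides forces $\rho$ into an explicit normal form: a direct sum of maximally entangled states on a matching of superselection sectors. Then I will identify this normal form with the Bell states.

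\emph{Step 1 (monotonicity, with equality conditions tracked).} Write $\A = \bigoplus_i \M_{n_i}$ and $\B = \bigoplus_j \M_{m_j}$, so $\A \tensor \B = \bigoplus_{i,j} \M_{n_i} \tensor \M_{m_j}$. Decompose $\rho = \bigoplus_{i,j} p_{ij} \rho_{ij}$ with each $\rho_{ij}$ a state, and let $p_i = \sum_j p_{ij}$. By the grouping axiom (\cref{3.10}),
$$\tilde S(\rho) = S(p) + \sum_{i,j} p_{ij}\bigl(S(\rho_{ij}) + \log n_i m_j\bigr).$$
The $i$-th block of $\rho_\A$ is $p_i \sigma_i$, where $\sigma_i = \sum_j (p_{ij}/p_i)\, \tr_\B \rho_{ij}$. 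I then apply two standard inequalities.
\begin{enumerate}
\item The mixing bound $S(\sigma_i) \leq \sum_j (p_{ij}/p_i) S(\tr_\B \rho_{ij}) + H(j \mid i)$. Equality holds iff the supports of the states $\tr_\B \rho_{ij}$, for $p_{ij} > 0$, are mutually orthogonal.
\item The Araki--Lieb bound $S(\tr_\B \rho_{ij}) \leq S(\rho_{ij}) + S(\tr_\A \rho_{ij}) \leq S(\rho_{ij}) + \log m_j$.
\end{enumerate}
Adding $\log n_i$ and summing yields $\tilde S(\rho_\A) \leq \tilde S(\rho)$, and symmetrically $\tilde S(\rho_\B) \leq \tilde S(\rho)$.

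\emph{Step 2 (equality forces the normal form).} Assume both equalities hold and fix a block with $p_{ij} > 0$. The second inequality in Step 2 then gives $S(\tr_\A\rho_{ij}) = \log m_j$ and $S(\tr_\B \rho_{ij}) = S(\rho_{ij}) + \log m_j$. The symmetric argument gives $S(\tr_\B\rho_{ij}) = \log n_i$ and $S(\tr_\A\rho_{ij}) = S(\rho_{ij}) + \log n_i$. Combining these, $S(\rho_{ij}) = 0$ and $n_i = m_j$, so $\rho_{ij}$ is a pure maximally entangled state with both marginals maximally mixed. A maximally mixed marginal has full support in its sector, so the orthogonality condition in the mixing bound allows at most one $j$ with $p_{ij} > 0$ for each $i$, and vice versa. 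Hence $\rho$ is a direct sum of maximally entangled states on $\M_{n} \tensor \M_{n}$ blocks, over a partial bijection between the sectors of $\A$ and those of $\B$. Conversely, a direct computation shows that every such state attains equality.

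\emph{Step 3 (normal form $\Leftrightarrow$ Bell).} For the "if" direction, use the identity $(P \tensor 1)|\Omega\> = (1 \tensor P^{T})|\Omega\>$ for the maximally entangled vector $|\Omega\>$ in each block. Assemble the transposes blockwise along the matching, and send unmatched sectors to $0$. This produces, for each Boolean $P \in \A$, a Boolean $Q \in \B$ with $\tr(\rho(P \tensor (1-Q))) = 0 = \tr(\rho((1-P)\tensor Q))$, which is perfect correlation.

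For the "only if" direction, start from a Bell $\rho$.
\begin{enumerate}
\item Applying perfect correlation to central projections shows that the block weights $p_{ij}$ are supported on a partial bijection.
\item Within a block, let the rank-one projections $P$ range over an orthonormal basis. Their correlated $Q$'s satisfy $\tr(\rho_\A P) = \tr(\rho_\B Q)$.
\item The key point is that for every unitary frame the correlated partner of each rank-one $P$ must be rank-one. I would try to obtain this from the facts that the $Q$'s for orthogonal $P$'s are orthogonal and sum to the block unit, which forces $n_i \le m_j$ and, symmetrically, equality.
\item It follows that the conditional state on $\B$ given $P$ is pure, so $\rho_{ij}$ is pure with a Schmidt decomposition in every basis of $\A$. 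This forces all Schmidt coefficients to be equal, so $\rho_{ij}$ is maximally entangled.
\end{enumerate}

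I expect this "only if" argument in Step 3 to be the main obstacle, specifically extracting rank and purity information from bare perfect correlation. If \cref{2.7} or a nearby lemma already characterizes Bell states structurally, I would invoke it there instead.
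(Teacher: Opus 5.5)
Your proof of ``equality $\Rightarrow$ Bell'' (Steps~1 and~2 with the ``if'' half of Step~3) is correct, and it takes a different route from the paper. The paper proves the one-sided \cref{3.8} from the monotonicity theorem behind \cref{3.4}, \cref{3.6}, \cref{3.7} and \cref{2.4,2.5}, then reads off \cref{3.9} via \cref{2.2}. You instead get monotonicity blockwise from Araki--Lieb and the mixing bound, and use both equalities at once, which gives $S(\rho_{ij})=0$ and $n_i=m_j$ almost for free. The price is that this does not give the one-sided \cref{1.4}.

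The gap is in the converse, ``Bell $\Rightarrow$ equality'', which in your plan rests entirely on the ``only if'' half of Step~3. Its item (1) is false as stated, because the partner of a central projection need not be central. Take $\A=\M_2$, $\B=\CC^2$ with minimal projections $e_1,e_2$, and $\rho=\frac12 E_{11}\tensor e_1+\frac12 E_{22}\tensor e_2$. The central projections $1_\A, e_1, e_2$ have the perfectly correlated partners $1_\B, E_{11}, E_{22}$, yet $p_{11}=p_{12}=\frac12$. So the matching can only be read off once you know each nonzero block has marginals of full support in its sectors. Item (3) also needs more than you state. Partners are not unique, so the $Q$'s for orthogonal $P$'s need not be orthogonal or sum to the block unit. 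After compressing them to the support $R$ of $\rho_\B$ (as in the proof of \cref{2.5}), they are orthogonal and sum to $R$. But they vanish whenever $\tr(\rho_\A P)=0$, so counting only bounds the rank of $\rho_\A$ by that of $\rho_\B$. You still need a separate argument that the block marginals have full rank.

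Most seriously, item (4) infers purity of $\rho_{ij}$ from purity of the conditional states, which is not valid. With $\sigma$ mixed, $\sigma\tensor|b\>\<b|$ has pure conditionals for every rank-one $P$. Likewise $\frac1n\sum_k|a_kb_k\>\<a_kb_k|$ even has orthogonal rank-one partners for the basis $(a_k)$. Purity comes only from how the partners of \emph{all} projections fit together, and that is exactly the content of \cref{2.4,2.5}:
\begin{enumerate}
\item Write each pure component as $\psi_r=\frac1{\sqrt n}\sum_i e_i\tensor V_re_i$.
\item Perfect correlation, with a common $Q$ for all components by \cref{2.3}, gives $V_r\overline P=QV_r$.
\item Hence $V_r^\dagger V_s$ commutes with every $\overline P$ and is a scalar, which must vanish for $r\neq s$ because the components are orthogonal.
\item Once $n=m$, the $V_r$ are multiples of unitaries, so there is only one component.
\end{enumerate}
Your fallback is the right repair. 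By \cref{2.3}, each nonzero block component $\rho_{ij}$ inherits every perfect correlation of $\rho$. Restricting $P$ and $Q$ to the relevant sectors shows $\rho_{ij}$ is Bell in $\M_{n_i}\tensor\M_{m_j}$, so \cref{2.7} gives $n_i=m_j$ and maximal entanglement. Only then do central projections give the matching. Alternatively, argue as the paper does: a Bell state is both right-Bell and left-Bell, and \cref{3.8} yields each of the two equalities directly.
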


This equivalence is a direct generalization of the classical case. Furthermore, we contend that this equivalence is necessary for any coherent account of localizable information. The ability to resolve any proposition in $\A$ by resolving some proposition in $\B$ and vice versa is the availability of the same information in $\A$ as in $\B$. Of course, $\A$ and $\B$ contain the same information iff $\A \tensor \B$ contains the same amount of information as $\A$ and as $\B$.

Our discussion of \cref{1.3} suggests the following equivalence, which serves as a test for the coherence of this conceptual framework.

\begin{theorem}[also~\cref{3.8}]\label{1.4}
A state $\rho \in \A \tensor \B$ is a right-Bell state iff $$\tilde S(\rho) = \tilde S (\rho_\B).$$
\end{theorem}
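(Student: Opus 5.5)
The plan is to reduce the equality $\tilde S(\rho) = \tilde S(\rho_\B)$ to an equality case of conditional von Neumann entropy, and then to identify that equality case with the right-Bell condition. I read right-Bell as: every Boolean observable in $\A$ is perfectly correlated with some Boolean observable in $\B$, so that the information in $\A$ is available in $\B$. If the paper's convention is the mirror one, the same argument applies with the roles swapped using the first axiom. The order of steps is: simple-$\A$ computation, identification of the equality case, the perfect-correlation characterization (the main obstacle), and finally the direct-sum reduction.

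\emph{Step 1 ($\A = \M_n$ simple).} Write $\B = \bigoplus_j \M_{m_j}$. Since $\zeta_{\A\tensor\B} = n\,1 \tensor \zeta_\B$, we get
$$\tilde S(\rho) - \tilde S(\rho_\B) = S(\rho) - S(\rho_\B) + \log n = S(A|B)_\rho + \log n.$$
Here $S(A|B)$ is computed blockwise in $\B$, and the $\log m_j$ corrections cancel because $\zeta_\B$ appears on both sides. Purify $\rho$ to a pure state on $ABE$. Then $S(A|B) = -S(A|E)$, and $S(A|E) \le S(A) \le \log n$. This gives $\tilde S(\rho) \ge \tilde S(\rho_\B)$, with equality iff two conditions hold: $\rho_\A = 1_n/n$, and $\rho_{AE} = \rho_\A \tensor \rho_E$.

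By Uhlmann's theorem (decoupling), the equality condition is equivalent to the following structure. There is an isometry identifying the support of $\rho_\B$ within each sector of $\B$ with $\CC^n \tensor K$, under which $\rho = \beta_n \tensor \sigma$, where $\beta_n$ is the maximally entangled state. This is the normal form for the equality case.

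\emph{Step 2 (normal form $\Rightarrow$ right-Bell).} Suppose $\rho$ has the normal form. For a projection $P \in \A = \M_n$, take $Q$ to be the image of $P^T$ in the first tensor factor, compressed to the support. Then $\tr(\rho(P \tensor (1-Q))) = 0$ and $\tr(\rho((1-P) \tensor Q)) = 0$, which is perfect correlation. One checks that $Q$ can be chosen to lie in $\B$ rather than only in $B(H_\B)$; this holds because the support of each block of $\rho_\B$ sits inside a single sector.

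\emph{Step 3 (right-Bell $\Rightarrow$ normal form; the main obstacle).} Perfect correlation of $P$ with $Q$ means $(P \tensor 1)\rho = (1 \tensor Q)\rho$, viewing $\rho$ through a vector purification. Applying this to the rank-one projections $P = |v\>\<v|$ gives projections $Q_v$. I would first show that the assignment $P \mapsto Q_v$ compressed to $\operatorname{supp}\rho_\B$ extends to a unital $*$-homomorphism $\M_n \to \B$. Orthogonal projections must go to orthogonal projections, by a probability count. Additivity on commuting families plus the Gleason-type rigidity of $\M_n$ should then give a Jordan, hence $*$-, homomorphism; for $n = 2$ a direct argument is needed. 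Taking $P = |v\>\<v|$ also forces $\tr(\rho_\A |v\>\<v|)$ to be independent of $v$, so $\rho_\A = 1/n$. An embedded copy of $\M_n$ in $\B$ whose action on the purification reproduces that of $\A$ then yields the decoupling $\rho_{AE} = \rho_\A \tensor \rho_E$.

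\emph{Step 4 (general $\A = \bigoplus_i \M_{n_i}$).} The central projections $z_i$ of $\A$ are Boolean observables. Under right-Bell they are perfectly correlated with projections $w_i \in \B$, which are pairwise orthogonal on the support. Hence $\rho = \sum_i p_i \rho_i$ with the $\rho_i$ orthogonal and with orthogonal $\B$-marginals. The grouping axiom (\cref{3.10}) then gives
$$\tilde S(\rho) - \tilde S(\rho_\B) = \sum_i p_i \bigl(\tilde S(\rho_i) - \tilde S((\rho_i)_\B)\bigr).$$
Each summand is nonnegative by Step 1, so equality holds iff each block is right-Bell.

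For the forward direction of this step I also need that equality forces the $\B$-marginals of the blocks to be orthogonal. This should follow from concavity: the $\B$-marginal term has a mixing gain of at most $S(p)$, with equality iff the marginals are orthogonal, while the joint state already gains exactly $S(p)$. Finally, each block reduces to the simple case of Steps 1--3.
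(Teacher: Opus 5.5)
Your route differs from the paper's, and most of it is sound. The paper splits $\rho$ spectrally into orthogonal pure states. It imports monotonicity (\cref{3.4}) from \cite{Kornell2025} to obtain \cref{3.6}, treats each pure state by \cref{3.7} and \cref{2.4}, and puts the real work into \cref{2.5}.

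You instead split only along the center of $\A$ and handle mixed states on $\M_n\tensor\B$ directly. Step 1 correctly gives $\tilde S(\rho)-\tilde S(\rho_\B)=\log n-S(A|E)\geq 0$, so you get monotonicity without \cite{Kornell2025}. Equality holds exactly when $\rho_\A=1_n/n$ and $A$ decouples from $E$, which is your normal form and precisely the structure in \cref{2.6}. Step 2 is fine. Step 4 is the argument of \cref{3.6} run on the central decomposition. For the converse you must still glue the blockwise projections into one, e.g.\ $Q=\sum_i Q_i\wedge R_i$ with $R_i$ the pairwise-orthogonal support projections of the block marginals, as in \cref{2.5}.

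The genuine gap is Step 3. Gleason's theorem is false for $\M_2$, so your argument gives nothing for $n=2$, the qubit case that motivates the paper, and you supply no replacement. Also, ``Jordan, hence $*$-homomorphism'' is a non sequitur, and here the conclusion is actually false. For $\beta_n$ the assignment is $P\mapsto\overline P=P^T$, so the linear extension is a unital $*$-anti-homomorphism into $R\B R$, where $R$ is the support projection of $\rho_\B$. This is harmless after composing with the transpose, but it shows multiplicativity needs its own argument. The claim that $\tr(\rho_\A|v\>\<v|)$ is independent of $v$ is also asserted without reason.

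Gleason is not needed at all. Let $\psi$ purify $\rho$. If $(P\tensor 1)\rho=(1\tensor Q)\rho$, then $Q\rho_\B=\tr_\A((P\tensor 1)\rho)=\rho_\B Q$. Hence $QR$ is a projection in $\B$ and $(P\tensor 1\tensor 1)\psi=(1\tensor QR\tensor 1)\psi$. Because $\|(1\tensor Y\tensor 1)\psi\|^2=\tr(Y\rho_\B Y^\dagger)$, the linear map $Y\mapsto(1\tensor Y\tensor 1)\psi$ is injective on $\B R$. Since projections span $\M_n$, every $A\in\M_n$ has a unique $L(A)\in\B R$ with $(A\tensor 1\tensor 1)\psi=(1\tensor L(A)\tensor 1)\psi$. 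This $L$ is linear, $L(1)=R$, and $L(P)=QR$ is self-adjoint. As $A\tensor 1$ commutes with $1\tensor L(B)$, we get $(AB\tensor 1\tensor 1)\psi=(1\tensor L(B)L(A)\tensor 1)\psi$, so $L(AB)=L(B)L(A)$. Thus $L$ is a unital $*$-anti-homomorphism $\M_n\to R\B R$ for every $n$, including $n=2$.

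Since $A\mapsto L(A^T)$ is a unital $*$-homomorphism into $R\B R$, after a sector-preserving unitary we have $R\CC^N\iso\CC^n\tensor K$ and $L(A)=A^T\tensor 1_K$. Testing the identity on $A=E_{cd}$ forces $\psi=\frac{1}{\sqrt n}\sum_{i=1}^n|e_i\>\tensor|e_i\>\tensor\phi$ for a unit vector $\phi\in K\tensor\mathcal H_E$. That is your normal form, and it gives $\rho_\A=1_n/n$ for free. This is the mixed-state analogue of the paper's observation that $V^\dagger V$ and $V_1^\dagger V_2$ commute with every $\overline P$. With it, your Step 3 replaces \cref{2.4} and the hard direction (1)$\Rightarrow$(2) of \cref{2.5} in a single argument, and your proof no longer depends on \cite{Kornell2025}.
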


\noindent Evidently, a state $\rho \in \A \tensor \B$ is \emph{right-Bell} if each Boolean observable in $\A$ is perfectly correlated with some Boolean observable in $\B$, and \cref{1.4} expresses that this is the case iff all information in $\A \tensor \B$ is contained in $\B$, which is the right part of the system. Our interpretation of the adjusted von Neumann entropy as quantifying localizable information led us to the statement of \cref{1.4}.

\subsection{Research context}

Right-Bell states are a special case of EPR states, which were introduced by Arens and Varadarajan \cite{ArensVaradarajan2000} in the context of type I factors and then immediately generalized to arbitrary von Neumann algebras by Werner \cite{Werner1999}. Explicitly, a right-Bell state is a bipartite state $\rho \in \A \tensor \B$ that is an EPR state relative to $(\A, \mathrm{Proj}(\A))$. In the present article, $\A$ and $\B$ are always finite-dimensional, though it is possible to extend the context to hereditarily atomic von Neumann algebras; see \cite[Remark~5.7]{Kornell2025}.

\cref{1.1,1.2} contribute to the axiomatic investigation of entropy. Shannon initiated this research program, proving \cite[Theorem~2]{Shannon1948}. Khinchin \cite{Khinchin1953} and Faddeev \cite{Faddeev1956} obtained similar characterizations. More recently, Baez, Fritz, and Leinster provided a category-theoretic axiomatization of Shannon entropy \cite[Theorem~2]{BaezFritzLeinster2011}.

Baez and Fritz then asked for a comparable characterization of von Neumann entropy based on the opposite of the category of finite-dimensional $C^*$-algebras and unital $\dagger$-homomorphisms \cite{Baez2011}. Such a characterization was obtained by Parzygnat \cite{Parzygnat2022}. Nakahira then obtained an elegant axiomatization of von Neumann entropy in terms of the category of finite-dimensional $C^*$-algebras and trace-preserving completely positive maps \cite[Theorem~3]{Nakahira2023}.

The question of Baez and Fritz was also a motivation for the research in the present article. Unital $\dagger$-homomorphisms can be characterized by the completely entropy-nonincreasing pullback of states for adjusted von Neumann entropy \cite[Theorem~1.1]{Kornell2025}, which suggests that the direct analogue of \cite[Theorem~2]{BaezFritzLeinster2011} should be an axiomatization of adjusted von Neumann entropy rather than of von Neumann entropy. Such a result can be extracted from section~\ref{section 3}, but we do not yet have a category-theoretic axiomatization of adjusted von Neumann entropy that matches \cite{BaezFritzLeinster2011} and \cite{Nakahira2023} in simplicity. Instead, \cref{1.2} characterizes adjusted von Neumann entropy in terms of Bell states.

In the same way, \cref{1.3} characterizes Bell states in terms of adjusted von Neumann entropy. It generalizes a combination of two well-known facts that are usually considered separately: first, a pure state $\rho \in \A \tensor \B$ with $\A = \M_n$ and $\B = \M_m$ is Bell iff it is maximally entangled with $n = m$ \cite{Schrodinger1935} and, second, $\rho$ is maximally entangled with $n = m$ iff
$$
S(\rho_\A) = \log m \qquad \text{ and } \qquad S(\rho_\B) = \log n
$$
\cite{NielsenChuang2010}. Thus, \cref{1.3} can be restated to characterize Bell states in terms of von Neumann entropy. Explicitly, a state $\rho \in \A \tensor \B$ is Bell iff
$$
S(\rho_\A) = S(\rho) + \tr(\rho_\B \log \zeta_\B) \qquad \text{ and } \qquad S(\rho_\B) = S(\rho) + \tr(\rho_\A \log \zeta_\A).
$$
The actual statement of \cref{1.3} might be misunderstood as tidying this equivalence with some bookkeeping. Its virtue is not that it is tidy but that it directly generalizes a conceptually rich equivalence in the classical case.

More generally, the significance of adjusted von Neumann entropy is that it generalizes Shannon entropy more faithfully than von Neumann entropy does. In his review article \cite{Witten2020}, Witten highlights the many qualitative differences between Shannon entropy and von Neumann entropy and comments on the ``miracle'' that von Neumann entropy is strongly subadditive despite these many differences. However, it then follows immediately that adjusted von Neumann entropy is strongly subadditive, so adjusted von Neumann entropy satisfies the same linear inequalities as Shannon entropy in tripartite systems \cite[Remark~1]{Han1981}. Thus, the miracle of strong subadditivity for von Neumann entropy can be explained as a remnant of the many features of adjusted von Neumann entropy that the latter shares with Shannon entropy.

This coincidence of properties between Shannon entropy and adjusted von Neumann entropy is not an accident. Indeed, adjusted von Neumann entropy first arose as a generalization of Shannon entropy in noncommutative geometry \cite{Kornell2025}. The ethos of this subject is a generalization of spaces to noncommutative algebras that minimally disturbs the properties of the spaces \cite{Gracia-BondiaVarillyFigueroa2001}. It was hence reasonable to expect that adjusted von Neumann entropy should share many features with Shannon entropy, and this expectation was a motivation for the research in the present article.

From the physical perspective, noncommutative geometry may be regarded as an interpretation of quantum theory. We briefly illustrate this thesis using the paradox of Einstein, Podolsky, and Rosen \cite{EinsteinPodolskyRosen1935}. In short, they argued that in a Bell scenario, ``no real change can take place in the second system in consequence of anything that may be done to the first system.'' In effect, noncommutative geometry embraces this position, implicitly glossing Bell scenarios in terms of hidden variables that range over quantum spaces.

We presume that this resolution would not have satisfied the authors of \cite{EinsteinPodolskyRosen1935}. It does not reveal an underlying classical theory that explains the strange phenomena of quantum mechanics, and it cannot \cite{Gleason1957, Bell1964, KochenSpecker1967, ClauserHorneShimonyHolt1969}. However, it does form the basis for a strikingly coherent account of these phenomena, of which \cref{1.4} is a small part. In this account, the information that determines the outcomes of local experiments exists locally; it is only of a more general kind than classical information. Adjusted von Neumann entropy quantifies this information.

The localization of information in quantum systems is particularly significant for quantum gravity. For example, relativistic constraints on the propagation of information indicate that entropy should be localizable, and the holographic principle assigns entropy bounds to regions of spacetime \cite{tHooft1993,Susskind1995,Bousso1999}. There is furthermore a superficial but intriguing resemblance between the expressions for adjusted von Neumann entropy and for generalized entropy \cite{Bekenstein1973}.

Adjusted von Neumann entropy is infinite in infinite-dimensional superselection sectors \cite[Remark~1.3]{Kornell2025}, but the Hilbert spaces that are associated to bounded regions of spacetime are often assumed to be finite-dimensional \cite{Bousso2002}. For example, the Hilbert space $\mathcal H$ associated to an ordinary causal diamond is sometimes explicitly taken to have dimension satisfying $\log(\dim \mathcal H) = a/4 = S$, where $a$ is the horizon area of the diamond, i.e., the area of its holographic screen, and $S$ is its maximum von Neumann entropy \cite[eqs.~1~and~2]{BanksDraperFarkas2021}. A recent derivation of the Einstein equation from the same connection between horizon area and entropy uses infinite-dimensional local algebras \cite{DorauMuch2026}, but these algebras are sometimes regarded as approximations \cite{Bousso2002}.

The holographic entropy bound is sometimes understood in terms of entanglement entropy \cite{BombelliKoulLeeSorkin1986,Srednicki1993,BoussoCasiniFisherMaldacena2014}, so von Neumann entropy is natural in this setting. We mention quantum gravity to suggest that adjusted von Neumann entropy may provide an alternative, more familiar, more classical account in this case, as it does in the case of Bell states.

\subsection*{Acknowledgment}

We thank Aaron David Fairbanks for contributing to the initial stage of this project.

\section{One-sided Bell states}

The two-fold purpose of this section is to review EPR states \cite{ArensVaradarajan2000}, introducing left-Bell and right-Bell states as a special case, and to prove \cref{2.5,2.6}, establishing a structural account for these one-sided Bell states in the setting of finite-dimensional $C^*$-algebras. We work with Boolean observables for convenience.

The \emph{Boolean observables} on $\A$ are the $\{0,1\}$-valued observables and, hence, are modeled by projections $P \in \A$. Given Boolean observables $P \in \A$ and $Q \in \B$, the probability of observing $1$ in both systems is the quantity
$$
\PP_\rho(P = 1, Q = 1) = \tr((P \tensor Q)\rho).
$$
Hence, the coincidence of measurement outcomes for Boolean observables $P$ and $Q$ can be expressed by the equation
$$\PP_\rho(P=0,Q =0) + \PP_\rho(P=1, Q =1) = 1.$$
Our first proposition examines this condition. Its content is folklore and also follows from results of Ozawa \cite{Ozawa2006}.

\begin{proposition}\label{2.1}
Let $\A$ and $\B$ be finite-dimensional $C^*$-algebras. Let $\rho \in \A \tensor \B$ be a state, and let $P \in \A$ and $Q \in \B$ be projections. Then, the following are equivalent:
\begin{enumerate}
\item $(P \tensor 1_\B)\rho = (1_\A \tensor Q)\rho$,
\item $\tr((P \tensor 1_\B)\rho) = \tr ((P \tensor Q) \rho) = \tr ((1_\A \tensor Q)\rho)$,
\item $\PP_\rho(P=0, Q=0) + \PP_\rho(P=1, Q=1) = 1$,
\item $\PP_\rho(P=0, Q=1) = 0$ and $\PP_\rho(P=1, Q = 0) = 0$,
\item $\PP_\rho(P=1) = \PP_\rho(Q=1) = 0$ or $\PP_\rho(P=1|Q=1) = \PP_\rho(Q=1|P=1) = 1$.
\end{enumerate}
\end{proposition}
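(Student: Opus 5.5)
Set $\tilde P = P \tensor 1_\B$ and $\tilde Q = 1_\A \tensor Q$. These are commuting projections in $\A \tensor \B$, with $\tilde P \tilde Q = P \tensor Q$. Write $P^\perp = 1_\A - P$ and $Q^\perp = 1_\B - Q$. Every probability in the statement is a trace of the form $\tr(X\rho)$, where $X$ is one of the four commuting projections $P^{(\perp)} \tensor Q^{(\perp)}$, and these four sum to $1$. The proof is then a chain of elementary identities together with one positivity argument.

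First, (2)$\Leftrightarrow$(4). We have $\PP_\rho(P=1,Q=0) = \tr((P\tensor Q^\perp)\rho) = \tr(\tilde P\rho) - \tr((P\tensor Q)\rho)$, and symmetrically $\PP_\rho(P=0,Q=1) = \tr(\tilde Q \rho) - \tr((P \tensor Q)\rho)$. So (2) says exactly that both quantities vanish, which is (4).

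Next, (3)$\Leftrightarrow$(4). The four probabilities are nonnegative and sum to $\tr(\rho) = 1$. Hence the diagonal pair sums to $1$ iff the off-diagonal pair sums to $0$, which happens iff each off-diagonal term is $0$.

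Next, (4)$\Leftrightarrow$(5). This is bookkeeping with the conditional probability $\PP(P=1|Q=1) = \PP(P=1,Q=1)/\PP(Q=1)$. Condition (4) gives $\PP(P=1) = \PP(P=1,Q=1) = \PP(Q=1)$. If this common value is $0$ we are in the first alternative of (5); otherwise both conditionals equal $1$. Conversely, each alternative of (5) forces both off-diagonal terms to vanish. In the first alternative, $0 \le \PP(P=1,Q=0) \le \PP(P=1) = 0$, and similarly for the other term. In the second, $\PP(P=1,Q=1) = \PP(Q=1) = \PP(P=1)$, so (2) holds.

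The main step is (1)$\Leftrightarrow$(4). For (1)$\Rightarrow$(2), multiply (1) on the left by $\tilde Q$ and use $\tilde Q^2 = \tilde Q$ to get $(P\tensor Q)\rho = \tilde Q\rho$. Taking traces gives one equality, and taking the trace of (1) itself gives the other. For the converse, use positivity. Set $X = P \tensor Q^\perp = \tilde P(1-\tilde Q)$, a projection. Then $\tr(X\rho) = \tr(X\rho X) = \|\rho^{1/2}X\|_2^2$, so $\tr(X\rho)=0$ forces $\rho^{1/2}X = 0$, hence $X\rho^{1/2}=0$ and $X\rho = 0$. Likewise $(P^\perp \tensor Q)\rho = 0$. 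Therefore
\[
(\tilde P - \tilde Q)\rho = \bigl(P\tensor Q^\perp - P^\perp \tensor Q\bigr)\rho = 0,
\]
using the identity $\tilde P - \tilde Q = P\tensor Q^\perp - P^\perp\tensor Q$, which one checks by expanding. The only care needed is that $\tr$ here is the unnormalized trace on $\A\tensor\B \subseteq \M_{nm}$, so the Hilbert–Schmidt argument applies verbatim.
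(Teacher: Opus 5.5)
Your proof is correct and is essentially the paper's argument. The one substantive step, deducing the operator identity (1) from $\tr((P\tensor Q^\perp)\rho) = \tr((P^\perp\tensor Q)\rho) = 0$ via the Hilbert--Schmidt norm of $\rho^{1/2}X$, is exactly what the paper does, and the rest is the same probabilistic bookkeeping, only organized with (4) as a hub rather than the paper's $(1)\Leftrightarrow(2)$, $(3)\Leftrightarrow(4)$ and cycle $(2)\Rightarrow(3)\Rightarrow(5)\Rightarrow(2)$.
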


\begin{proof}
First, we prove that $(1) \Leftrightarrow (2)$. The implication $(1) \Rightarrow (2)$ is trivial. We prove the implication $(2) \Rightarrow (1)$. Assume (2). It follows that $$\tr(\sqrt \rho (P \tensor (1_\B - Q))^2 \sqrt \rho) = \tr((P \tensor (1_\B-Q))\rho) = 0,$$
$$
\tr(\sqrt \rho ((1_\A - P) \tensor Q)^2 \sqrt \rho) = \tr(((1_\A -P) \tensor Q) \rho) = 0,
$$
so $(P \tensor (1_\B -Q)) \sqrt \rho = 0$ and $((1_\A - P) \tensor Q) \sqrt \rho = 0$ because $(X,Y) \mapsto \tr(X^\dagger Y)$ is an inner product on $\A \tensor \B$. Thus, $(P \tensor (1_\B -Q)) \rho = 0$ and $((1_\A - P) \tensor Q) \rho = 0$. Therefore, $(2) \Rightarrow (1)$.

Now, we prove the implication $(2) \Rightarrow (3)$. Assume (2). We calculate that
\begin{align*}
\PP_\rho(P=0&, Q=0) + \PP_\rho(P=1, Q = 1)
\\ &=
\PP_\rho(1_\A-P = 1, 1_\B-Q =1) + \PP_\rho(P=1,Q=1)
\\ &=
\tr(((1_\A-P) \tensor (1_\B-Q))\rho) + \tr((P \tensor Q)\rho)
\\ &
= \tr((1_\A \tensor 1_\B)\rho) - \tr((P\tensor 1_\B)\rho) - \tr((1_\A \tensor Q)\rho) + 2 \tr((P \tensor Q)\rho)
\\ & = \tr(\rho) + 0 = 1.
\end{align*}
Therefore, $(2) \Rightarrow (3)$.

The equivalence $(3) \Leftrightarrow (4)$ follows immediately from the equation
$$
\PP_\rho(P = 0, Q = 0) + \PP_\rho(P = 0, Q = 1) + \PP_\rho(P = 1, Q = 0) + \PP_\rho(P = 1, Q = 1) = 1.
$$

We prove the implication $(3) \Rightarrow (5)$. Assume (3). If $\PP_\rho(P=1) = 0$, then $\PP_\rho(P=1,Q=1) = 0$, so $\PP_\rho(P=0,Q=0) = 1$. In this case, $\PP_\rho(Q=0) =1$, and hence, $\PP_\rho(Q=1) = 0$, yielding (5). We have shown that $\PP_\rho(P=1) = 0$ implies $\PP_\rho(Q =1) = 0$; symmetrically, $\PP_\rho(Q =1) = 0$ implies $\PP_\rho(P=1) = 0$. If $\PP_\rho(P=1) \neq 0$, then $\PP_\rho(Q=1) \neq 0$. In this case, we have that
\begin{align*}
\PP_\rho(P=1|Q=1) & = \frac{\PP_\rho(P=1,Q=1)}{\PP_\rho(Q=1)} \\ & = \frac{\PP_\rho(Q=1) - \PP_\rho(P=0, Q=1)}{\PP_\rho(Q=1)} = \frac{\PP_\rho(Q=1)}{\PP_\rho(Q=1)} = 1
\end{align*}
because $\PP_\rho(P=0, Q=1) = 0$, and similarly, we have that $\PP_\rho(Q=1|P=1) = 1$, again yielding (5). Therefore, $(3) \Rightarrow (5)$.

We prove the implication $(5) \Rightarrow (2)$. Assume (5). If $\PP_\rho(P=1) = 0$ and $\PP_\rho(Q=1) = 0$, then $\PP_\rho(P=1, Q=1) = 0$, so $\tr((P \tensor 1_\B)\rho)$, $\tr((1_\A \tensor Q)\rho)$, and $\tr((P \tensor Q)\rho)$ are all equal to zero and, hence, to each other. Alternatively, if $\PP_\rho(P=1|Q=1) =1$ and $\PP_\rho(Q=1|P=1) =1$, then
$$
\frac{\tr((P\tensor Q)\rho)}{\tr((1_\A \tensor Q)\rho)} = \frac{\PP_\rho(P=1, Q=1)}{\PP_\rho(Q=1)} = 1,
$$
$$
\frac{\tr((P\tensor Q)\rho)}{\tr((P \tensor 1_\B)\rho)} = \frac{\PP_\rho(P=1, Q=1)}{\PP_\rho(P=1)} = 1,
$$
so $\tr((1_\A \tensor Q)\rho) = \tr((P\tensor Q)\rho) = \tr((P \tensor 1_\B)\rho)$. Therefore, $(5) \Rightarrow (2)$.
\end{proof}

We now use \cref{2.1} to define one-sided and two-sided Bell states.

\begin{definition}\label{2.2}
Let $\A$ and $\B$ be finite-dimensional $C^*$-algebras. Let $\rho \in \A \tensor \B$ be a state. We say that $\rho$ is
\begin{enumerate}
\item a \emph{right-Bell state} if, for each projection $P \in \A$, there exists a projection $Q \in \B$ such that any of the equivalent conditions in \cref{2.1} hold,
\item a \emph{left-Bell state} if, for each projection $Q \in \B$, there exists a projection $P \in \A$ such that any of the equivalent conditions in \cref{2.1} hold,
\item a \emph{Bell state} if it is both left-Bell and right-Bell.
\end{enumerate}
\end{definition}

Thus, a right-Bell state is a state with the property that the measurement outcomes of a Boolean observable on the left part of the system coincide with the measurement outcomes of some Boolean observable on the right part of the system; this is \cref{2.1}(3). Such a state is called a \emph{right}-Bell state because, intuitively, the right part of the system contains information about all measurement outcomes.

Right-Bell states and left-Bell states are natural classes of EPR states \cite{ArensVaradarajan2000, Werner1999}. Specifically, the right-Bell states are the EPR states relative to $(\A, \mathrm{Proj}(\A))$, and the left-Bell states are the EPR states relative to $(\B, \mathrm{Proj}(\B))$, where $\mathrm{Proj}(\A)$ is the set of projections in $\A$. They are also the EPR states relative to $(\A, \mathrm{Herm}(\A))$ and $(\B,\mathrm{Herm}(\B))$, respectively, because \cite{Werner1999} both $\mathrm{Proj}(\A)$ and $\mathrm{Herm}(\A)$ generate $\A$ as a von Neumann algebra. Thus, \cref{2.2} can be equivalently phrased in terms of self-adjoint operators, i.e., all observables.

We now show that a mixture of bipartite states is right-Bell only if the states themselves are right-Bell. This implies that every right-Bell state is a mixture of right-Bell pure states.

\begin{lemma}\label{2.3}
Let $\A$ and $\B$ be finite-dimensional $C^*$-algebras. Let $\rho_0, \rho_1 \in \A \tensor \B$ be states, let $t \in (0,1)$, and let $\rho = (1-t) \rho_0 + t \rho_1$. For all projections $P \in \A$ and $Q \in \B$, if $(P \tensor 1_\B) \rho = (1_\A \tensor Q)\rho$, then $(P \tensor 1_\B) \rho_0 = (1_\A \tensor Q)\rho_0$ and $(P \tensor 1_\B) \rho_1 = (1_\A \tensor Q)\rho_1$.
\end{lemma}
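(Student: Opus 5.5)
The plan is to pass through condition (4) of \cref{2.1}, which expresses the coincidence of outcomes through the vanishing of two nonnegative quantities. These quantities are affine in the state, so their vanishing for the mixture $\rho$ forces their vanishing for each component.

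First, I will assume $(P \tensor 1_\B)\rho = (1_\A \tensor Q)\rho$. By \cref{2.1}, (1) $\Rightarrow$ (4), this gives
$$\tr((P \tensor (1_\B - Q))\rho) = 0 \quad\text{and}\quad \tr(((1_\A - P) \tensor Q)\rho) = 0.$$

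Next, I expand by linearity of the trace. Setting $X = P \tensor (1_\B - Q)$, we get
$$0 = \tr(X\rho) = (1-t)\tr(X\rho_0) + t\tr(X\rho_1).$$
Here $X$ is a projection, being a tensor product of commuting projections. Each $\rho_j$ is positive, so $\tr(X\rho_j) = \tr(\sqrt{\rho_j}\, X \sqrt{\rho_j}) \geq 0$. Both coefficients $1-t$ and $t$ are strictly positive, which is where $t \in (0,1)$ is used. Hence $\tr(X\rho_0) = \tr(X\rho_1) = 0$. The identical argument applies to $Y = (1_\A - P) \tensor Q$.

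Finally, for each $j \in \{0,1\}$ the two vanishing statements say exactly $\PP_{\rho_j}(P=1,Q=0) = 0$ and $\PP_{\rho_j}(P=0,Q=1) = 0$, which is condition (4) of \cref{2.1} for the state $\rho_j$. Applying \cref{2.1}, (4) $\Rightarrow$ (1), to $\rho_j$ yields $(P \tensor 1_\B)\rho_j = (1_\A \tensor Q)\rho_j$, as required.

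There is no real obstacle here. The only points to check are that $X$ and $Y$ are positive operators, so that each term in the expansion is nonnegative, and that each $\rho_j$ is itself a state, so that \cref{2.1} applies to it.
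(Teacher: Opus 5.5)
Your proposal is correct and matches the paper's proof: both apply \cref{2.1} to get the two vanishing probabilities for $\rho$, split them by linearity into nonnegative terms with strictly positive weights $1-t$ and $t$, and apply \cref{2.1} again to each $\rho_j$. You also spell out why each term $\tr(X\rho_j)$ is nonnegative, which the paper leaves implicit.
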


\begin{proof}
Assume that $(P \tensor 1_\B) \rho = (1_\A \tensor Q)\rho$. By \cref{2.1}, it follows that
$$(1-t)\PP_{\rho_0}(P = 0, Q = 1) + t \PP_{\rho_1}(P = 0, Q = 1) = \PP_\rho(P = 0, Q = 1) = 0$$ and that
$$(1-t)\PP_{\rho_0}(P = 1, Q = 0) + t \PP_{\rho_1}(P = 1, Q = 0) = \PP_\rho(P = 1, Q = 0) = 0.$$
Thus, $\PP_{\rho_0}(P = 0, Q = 1) = 0$, $\PP_{\rho_1}(P = 0, Q = 1) = 0$, $\PP_{\rho_0}(P = 1, Q = 0) = 0$, and $\PP_{\rho_1}(P = 1, Q = 0) = 0$. By \cref{2.1}, we conclude that $(P \tensor 1_\B) \rho_0 = (1_\A \tensor Q)\rho_0$ and $(P \tensor 1_\B) \rho_1 = (1_\A \tensor Q)\rho_1$.
\end{proof}

In the fully classical case, i.e., when $\A \iso \CC^n$ and $\B \iso \CC^m$, the state $\rho$ is essentially a probability distribution on the set $\{1, \ldots, n\} \times \{1, \ldots, m\}$. In this case, it is straightforward that $\rho$ is a right-Bell state iff that probability distribution is supported on the graph of a partial function $\{1, \ldots, m\} \to \{1, \ldots, n\}$. Similarly, $\rho$ is a Bell state iff that probability distribution is supported on the graph of a partial injection $\{1, \ldots, m\} \to \{1, \ldots, n\}$. Thus, in the classical case, all pure states on $\A \tensor \B$ are Bell.

The following theorem, which characterizes the pure states that are right-Bell in the fully quantum case, is a corollary of \cite[Theorem~5]{ArensVaradarajan2000}. We include a direct proof for simplicity and completeness. We write $E_{ij}$ for the matrix units.

\begin{theorem}\label{2.4}
Let $\A = \M_n$ and $\B = \M_m$, and let $\rho \in \A \tensor \B$ be a pure state. Then, the following are equivalent:
\begin{enumerate}
\item $\rho$ is a right-Bell state,
\item $n \leq m$, and there is a unitary $U\: \CC^m \rightarrow \CC^m$ such that
$$
(1_n \tensor U^\dagger) \rho (1_n \tensor U) = \frac 1 n \sum_{i, j = 1}^n E_{ij} \tensor E_{ij},
$$
\item $n \leq m$, and there is an isometry $V\: \CC^n \rightarrow \CC^m$ such that
$$
\rho = \frac 1 n \sum_{i, j = 1}^n E_{ij} \tensor V E_{ij} V^\dagger.
$$
\end{enumerate}
Furthermore, this isometry $V$ is unique up to a phase factor.
\end{theorem}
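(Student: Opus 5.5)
We will prove the cycle $(2) \Rightarrow (3) \Rightarrow (1) \Rightarrow (2)$ and then treat uniqueness. Write $\rho = |\psi\>\<\psi|$ with $\psi \in \CC^n \tensor \CC^m$ a unit vector.

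\emph{$(2) \Rightarrow (3)$.} Let $\iota\: \CC^n \to \CC^m$ be the inclusion of the first $n$ coordinates, so that $E_{ij} \in \M_n$ corresponds to $\iota E_{ij} \iota^\dagger \in \M_m$. Set $V = U\iota$, which is an isometry. Conjugating the displayed identity by $1_n \tensor U$ gives exactly the formula in (3).

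\emph{$(3) \Rightarrow (1)$.} Here $\rho = |\psi\>\<\psi|$ with $\psi = n^{-1/2}\sum_i e_i \tensor Ve_i$. Given a projection $P \in \M_n$, take $Q = V\bar P V^\dagger$, where $\bar P$ is the entrywise complex conjugate; this $Q$ is a projection because $V$ is an isometry. The transpose trick $(X \tensor 1)\sum_i e_i \tensor e_i = (1 \tensor X^T)\sum_i e_i\tensor e_i$ yields $(P \tensor 1)\psi = (1 \tensor VP^TV^\dagger)\psi$. Since $P^T = \bar P$ for self-adjoint $P$, this gives $(P\tensor 1)\rho = (1\tensor Q)\rho$, which is condition (1) of \cref{2.1}.

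\emph{$(1) \Rightarrow (2)$.} This is the main step. Take a Schmidt decomposition $\psi = \sum_{k=1}^r \lambda_k f_k \tensor g_k$ with $\lambda_k > 0$ and orthonormal families $\{f_k\}$ in $\CC^n$ and $\{g_k\}$ in $\CC^m$. Since $\rho$ is pure, $(P\tensor 1)\rho = (1\tensor Q)\rho$ says $(P \tensor 1)\psi = (1\tensor Q)\psi$.
\begin{itemize}
\item \textbf{Full Schmidt rank.} Apply the hypothesis to the projection onto $\mathrm{span}\{f_k\}^\perp$. Then $(P\tensor 1)\psi = 0$, so $(1 \tensor Q)\psi = 0$, hence $Q g_k = 0$ for all $k$. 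Next apply the hypothesis to $1-P$, which is the projection onto $\mathrm{span}\{f_k\}$ and satisfies $((1-P)\tensor 1)\psi = \psi$. This yields a projection $Q'$ with $Q'g_k = g_k$ for all $k$. Hence the situation is consistent, but the rank is not yet forced. To force it, take $P$ to be the rank-one projection onto a unit vector $v \perp \mathrm{span}\{f_k\}$. We get $Q\psi = 0$-type relations but no contradiction. So instead we use the reduced-operator form below, and separately use the correspondence $P \mapsto Q$ on the full algebra.
\item \textbf{Reduced-operator form.} Identify $\psi$ with the operator $X = \sum_k \lambda_k g_k f_k^T\: \CC^n \to \CC^m$, under which $(A \tensor B)\psi \leftrightarrow BXA^T$. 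The condition becomes $XP^T = QX$ for every projection $P$. Applying this to $P$ and $1-P$ and multiplying, $X^\dagger X P^T = X^\dagger Q X = (X^\dagger Q X)^\dagger = P^T X^\dagger X$. So $X^\dagger X$ commutes with every projection in $\M_n$, hence with all of $\M_n$, since projections span it.
\end{itemize}
Therefore $X^\dagger X = c\,1_n$, and $\tr(X^\dagger X) = \|\psi\|^2 = 1$ gives $c = 1/n$. Thus $X = n^{-1/2}W$ with $W$ an isometry $\CC^n \to \CC^m$, which forces $n \le m$, and $\psi = n^{-1/2}\sum_i e_i \tensor We_i$. This is (3) with $V = W$, and (2) follows by extending $W$ to a unitary $U$ with $U\iota = W$. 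The key difficulty, avoiding Schmidt-rank bookkeeping, is resolved by this commutation argument.

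\emph{Uniqueness.} If $V$ and $V'$ give the same $\rho$, then the corresponding vectors satisfy $\psi' = e^{i\theta}\psi$. Hence $n^{-1/2}V' = e^{i\theta}n^{-1/2}V$ as operators $\CC^n \to \CC^m$, so $V' = e^{i\theta}V$.
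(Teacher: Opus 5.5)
Your proof is correct and follows the paper's own argument. Your $X$ is the paper's $V/\sqrt n$ from the operator--vector correspondence, your relation $XP^T = QX$ (with $P^T=\overline P$) is the paper's $V\overline P = QV$, and deducing that $X^\dagger X$ commutes with every projection, hence equals $\frac1n 1_n$, is exactly the paper's key step; the paper only orders the cycle as $(1)\Rightarrow(3)\Rightarrow(2)\Rightarrow(1)$ and phrases uniqueness via $V_1E_{ij}V_1^\dagger = V_2E_{ij}V_2^\dagger$. The abandoned ``Full Schmidt rank'' bullet and the mention of applying the relation to $1-P$ play no role in the argument and should be deleted.
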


\begin{proof}
We prove $(1) \Rightarrow (3)$. Assume that $\rho$ is a right-Bell state. Since $\rho$ is pure, $\rho = |\psi\>\<\psi|$ for some unit vector $\psi \in \CC^n \tensor \CC^m$. This immediately implies that, for every projection $P$, there exists a projection $Q$ such that $$(P \tensor 1_\B)|\psi\> = (1_\A \tensor Q)|\psi\>.$$
Identifying $\CC^n$ with its dual space using the standard basis, we obtain a nonzero operator $V \: \CC^n \to \CC^m$ by $V = \sqrt n (\varepsilon \tensor 1_m) (1_n \tensor |\psi\>)$, where $\varepsilon\: \CC^n \tensor \CC^n \to \CC$ is the counit. We also obtain a transpose operation and, hence, a conjugation operation on $\A$. For each $A \in \A$, the conjugate $\overline A$ is just the entrywise conjugate of $A$ as a matrix.

Reasoning graphically \cite{Selinger2011}, we find that for every projection $P \in \A$, there exists a projection $Q \in \B$ such that $V \overline P = Q V$ and, hence, such that $V^\dagger V \overline P = V^\dagger Q V = \overline P V^\dagger V$. Thus, $V$ is a scalar multiple of an isometry. Composing with the unit $\eta\: \CC \to \CC^n \tensor \CC^n$, we find that
$$
|\psi\> = \frac 1 {\sqrt n} (1_n \tensor V) \eta = \frac 1 {\sqrt n} \sum_{i =1}^n |e_i\> \tensor V |e_i \>.
$$
Therefore, $V$ is an isometry, and
$$
\rho = |\psi\>\<\psi| = \frac 1 n \sum_{i, j = 1}^n E_{ij} \tensor V E_{ij} V^\dagger.
$$
If $V_1$ and $V_2$ are two isometries that yield $\rho$ in this way, then $V_1 E_{ij} V_1^\dagger = V_2 E_{ij} V_2^\dagger$ for all $1 \leq i, j \leq n$, so $V_1$ and $V_2$ are equal up to a phase factor.

To conclude $(3) \Rightarrow (2)$, we extend the isometry $V\: \CC^n \to \CC^m$ to a unitary $U\: \CC^m \to \CC^m$. To conclude $(2) \Rightarrow (1)$, it is enough to observe that the state
$$
\rho = \frac 1 n \sum_{i, j = 1}^n E_{ij} \tensor E_{ij}
$$
is right-Bell because it satisfies $(P \tensor 1_m) \rho = (1_n \tensor (\overline P \oplus 0_{m-n}))\rho$.
\end{proof}

A state $\rho \in \A \tensor \B$ is a mixture of finitely many pure states, each of which is supported on a single factor of $\A \tensor \B$. If $\rho$ is right-Bell, then so are these pure states by \cref{2.3}, and each of them is described by \cref{2.4}. However, not every mixture of finitely many right-Bell pure states is itself right-Bell. For example, it is easy to show that, when $\rho_1, \rho_2, \rho_3, \rho_4 \in \M_2 \tensor \M_2$ are the standard Bell states, then their uniform mixture $\rho \in \M_4$ is not right-Bell because it is the maximally mixed state. Our next theorem provides the condition that ensures that a mixture of right-Bell states is itself right-Bell. We say that states $\rho_1$ and $\rho_2$ are \emph{orthogonal} and write $\rho_1 \perp \rho_2$ if $\tr(\rho_1 \rho_2) = 0$ or, equivalently, $\rho_1\rho_2 = 0$.

\begin{theorem}\label{2.5}
Let $\A$ and $\B$ be finite-dimensional $C^*$-algebras. Let
$$\rho = p_1 \rho_1 + \cdots + p_k \rho_k$$
for pairwise-orthogonal right-Bell states $\rho_1, \ldots, \rho_k \in \A \tensor \B$ and positive real numbers $p_1, \ldots, p_k$ that sum to $1$. The following are equivalent:
\begin{enumerate}
\item $\rho$ is right-Bell,
\item the states $\tr_\A (\rho_i)$, for $1 \leq i \leq k$, are pairwise-orthogonal.
\end{enumerate}
\end{theorem}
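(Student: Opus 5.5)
The plan is to prove the two implications separately. (2)$\Rightarrow$(1) is a direct gluing of the witnesses for the individual $\rho_i$. (1)$\Rightarrow$(2) should follow by showing that a right-Bell state lets every operator on $\A\tensor\B$, in particular the support projection of each $\rho_i$, be simulated on $\B$.

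\textbf{(2)$\Rightarrow$(1).} Write $\sigma_i = \tr_\A(\rho_i) \in \B$ and let $R_i$ be the support projection of $\sigma_i$. By (2), the $R_i$ are pairwise orthogonal.

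First, $(1_\A \tensor R_i)\rho_i = \rho_i$. Indeed, $\tr((1_\A\tensor(1_\B-R_i))\rho_i) = \tr((1_\B-R_i)\sigma_i)=0$, and the inner-product argument from the proof of \cref{2.1} then gives $(1_\A\tensor(1_\B-R_i))\rho_i=0$.

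Now fix a projection $P\in\A$, and choose projections $Q_i$ with $(P\tensor 1_\B)\rho_i=(1_\A\tensor Q_i)\rho_i$. Taking adjoints gives $\rho_i(P\tensor 1_\B)=\rho_i(1_\A\tensor Q_i)$. Applying $\tr_\A$ to both identities, and using that $\tr_\A$ is cyclic for operators of the form $P\tensor 1_\B$, yields $Q_i\sigma_i = \sigma_i Q_i$. Hence $Q_i$ commutes with $R_i$, so $Q_iR_i$ is a projection with $(1_\A\tensor Q_iR_i)\rho_i=(1_\A\tensor Q_i)\rho_i$.

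Put $Q=\sum_i Q_iR_i$, which is a projection because the $R_i$ are orthogonal. Since $(1_\A\tensor R_j)\rho_i = (1_\A\tensor R_jR_i)\rho_i=0$ for $j\ne i$, we get $(1_\A\tensor Q)\rho_i=(P\tensor 1_\B)\rho_i$ for every $i$. Summing with weights $p_i$ gives $(P\tensor 1_\B)\rho = (1_\A\tensor Q)\rho$. (Orthogonality of the $\rho_i$ is not needed in this direction.)

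\textbf{(1)$\Rightarrow$(2).} Assume $\rho$ is right-Bell. The first step is to upgrade the defining property from projections of $\A$ to all operators.

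For each $a\in\A$ there is $b\in\B$ with $(a\tensor 1_\B)\rho=(1_\A\tensor b)\rho$: take the spectral decomposition of a self-adjoint $a$ and use linearity. For products, if $(a_k\tensor 1_\B)\rho=(1_\A\tensor b_k)\rho$ for $k=1,2$, then $(a_1a_2\tensor 1_\B)\rho=(1_\A\tensor b_2b_1)\rho$, because operators on different legs commute. More generally, for each elementary tensor $x = a\tensor c$ we have $x\rho = (1_\A\tensor c\,b)\rho$. By linearity, every $x\in\A\tensor\B$ satisfies $x\rho=(1_\A\tensor b_x)\rho$ for some $b_x\in\B$.

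Apply this to $x=E_1$, the support projection of $\rho_1$. Pairwise orthogonality of the $\rho_i$ gives $E_1\rho = p_1\rho_1 = \rho E_1$ and $E_1\rho_j=0$ for $j\neq 1$. Hence, with $b = b_{E_1}$,
\[
(1_\A\tensor b)\rho_1=\rho_1,\qquad (1_\A\tensor b)\rho_j=0 \quad (j\ne 1),
\]
and applying $\tr_\A$ gives $b\sigma_1=\sigma_1$ and $b\sigma_j=0$.

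\textbf{Main obstacle.} These relations alone do not force $\sigma_1\perp\sigma_j$, because $b$ need not be normal. The remaining work is to extract positivity. I would try two routes in order.

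First, use the compression identity
\[
p_1\rho_1=E_1\rho E_1=(1_\A\tensor b)\rho(1_\A\tensor b^\dagger).
\]
Taking $\tr_\A$ gives $p_1\sigma_1 = b\sigma b^\dagger$, where $\sigma=\tr_\A(\rho)=\sum_i p_i\sigma_i$. The adjoint relation $\rho E_1=\rho(1_\A\tensor b^\dagger)$ gives $p_1\sigma_1=\sigma b^\dagger$. Together these should show that $b$ acts as an orthogonal projection on the support of $\sigma$, i.e., $b^\dagger b$ restricted to that support is an orthogonal projection. From that, $\sigma_1 \perp \sigma_j$ follows.

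If this does not close, the second route is to reduce to pure components. Decompose each $\rho_i$ into pure right-Bell states, which is possible by \cref{2.3}, and use the explicit form in \cref{2.4} on each factor, where the $\B$-marginals are $\frac1n VV^\dagger$. Orthogonality of the $\B$-marginals then becomes orthogonality of the ranges of the isometries $V$. This should follow by testing the right-Bell condition against projections $P$ that separate the components.
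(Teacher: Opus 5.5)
\textbf{Verdict: the direction (2)$\Rightarrow$(1) is correct, but (1)$\Rightarrow$(2) has a genuine gap; neither of your routes supplies the decisive step.}

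Your (2)$\Rightarrow$(1) is slightly slicker than the paper's. The paper builds $Q_i \wedge R_i$ with von Neumann's alternating-projection formula. You instead note, via cyclicity of $\tr_\A$ against $P \tensor 1_\B$, that $Q_i$ commutes with $\tr_\A(\rho_i)$, so $Q_i \wedge R_i = Q_i R_i$ directly.

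Route 1 cannot close. Since $b\sigma = p_1\sigma_1$ is self-adjoint, the relations $\sigma b^\dagger = p_1\sigma_1$ and $b\sigma b^\dagger = p_1\sigma_1 b^\dagger = p_1\sigma_1$ follow formally from $b\sigma_1 = \sigma_1$ and $b\sigma_j = 0$, so they add nothing. For instance, $b = E_{11} - E_{12} \in \M_2$ with $\sigma_1 = E_{11}$ and $\sigma_2 = \frac12(E_{11}+E_{12}+E_{21}+E_{22})$ satisfies all of them.

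Worse, your upgraded property does not imply (2). That property says every $x \in \A \tensor \B$ agrees on the range of $\rho$ with some $1_\A \tensor b_x$. Here is a counterexample:
\begin{itemize}
\item Take $\A = \M_2$ and $\B = \M_4$, with $f_1,\dots,f_4$ the standard basis of $\CC^4$.
\item Take the isometries $V_1 e_1 = f_1$, $V_1 e_2 = f_2$, $V_2 e_1 = f_3$, $V_2 e_2 = (f_1+f_4)/\sqrt 2$, and let $\rho_r$ be the corresponding pure right-Bell states of \cref{2.4}.
\item Then $\tr(V_1^\dagger V_2) = 0$, so $\rho_1 \perp \rho_2$.
\item The four vectors $V_r e_k$ form a basis of $\CC^4$. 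Hence for any $W_1, W_2 \colon \CC^2 \to \CC^4$ there is $b$ with $b V_r = W_r$, which is exactly your property for $\rho = \frac12(\rho_1 + \rho_2)$.
\item Yet $V_1 V_1^\dagger$ and $V_2 V_2^\dagger$ are not orthogonal.
\end{itemize}
Passing to products and linear combinations discards exactly what matters: the witness of a projection $P$ is a projection $Q$, in particular self-adjoint.

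Route 2 is the paper's strategy in outline, but separating projections only handle the easy cases, where the two pure components lie in different central summands. If the summands of $\B$ differ, the marginals are orthogonal automatically; if the summands of $\A$ differ, you test with the central projection. In the essential case both components lie in one factor $\M_n \tensor \M_m$, with $\psi_r = n^{-1/2}\sum_k e_k \tensor V_r e_k$. There no projection of $\A$ separates them: $(P \tensor 1)\psi_1 = \psi_1$ forces $V_1 \overline P = V_1$, hence $P = 1_n$.

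The missing argument runs as follows:
\begin{itemize}
\item By \cref{2.3}, a \emph{single} projection $Q$ witnesses $P$ for both components, so $V_r \overline P = Q V_r$ for $r = 1,2$.
\item Since $Q = Q^\dagger$, $V_1^\dagger V_2 \overline P = V_1^\dagger Q V_2 = \overline P V_1^\dagger V_2$.
\item So $V_1^\dagger V_2$ commutes with every projection and equals $c 1_n$ for a scalar $c$.
\item Only now does $\rho_1 \perp \rho_2$ enter: $0 = |\langle \psi_1 | \psi_2 \rangle|^2 = n^{-2}|\tr(V_1^\dagger V_2)|^2 = |c|^2$.
\end{itemize}
Your sketch never uses orthogonality of the $\rho_i$ in this direction, and it is indispensable. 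Taking $\rho_1 = \rho_2$ gives a right-Bell mixture whose two marginals coincide, so they are not orthogonal.
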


\begin{proof}
We prove $(1) \Rightarrow (2)$. Assume that $\rho$ is right-Bell. By Lemma~\ref{2.3}, without loss of generality, $k = 2$, and $\rho_1$ and $\rho_2$ are right-Bell pure states. For $i \in \{1,2\}$, let $P_i$ be the unique minimal central projection of $\A$ such that $(P_i \tensor 1_\B)\rho_i = \rho_i$, and let $Q_i$ be the unique minimal central projection of $\B$ such that $(1_\A \tensor Q_i) \rho_i = \rho_i$.

Assume that $Q_1 \neq Q_2$. Then, we reason that for both $i \in \{1,2\}$,
$$
Q_i\tr_\A(\rho_i) = \tr_\A((1_\A \tensor Q_i)\rho_i) = \tr_\A(\rho_i).
$$
Therefore, $\tr_\A(\rho_1) \perp \tr_\A(\rho_2)$ because $Q_1 \perp Q_2$.

Assume that $P_1 \neq P_2$. Let $Q \in \B$ be a projection such that $(P_1 \tensor 1_\B) \rho = (1_\A \tensor Q)\rho$. We have that $(P_1 \tensor 1_\B) \rho_1 = \rho_1$, that $((1_\A-P_1) \tensor 1_\B) \rho_2 = \rho_2$, that $(P_1 \tensor 1_\B) \rho = (1_\A \tensor Q)\rho$, and that $((1_\A -P_1) \tensor 1_\B)\rho = (1_\A \tensor (1_\B - Q))\rho$. Appealing to Lemma~\ref{2.3}, we reason that
$$
Q \tr_\A (\rho_1) = \tr_\A ((1_\A \tensor Q)\rho_1) = \tr_\A((P_1 \tensor Q) \rho_1) = \tr_\A(\rho_1).
$$
Similarly, $(1_\B -Q)\tr_\A(\rho_2) = \tr_\A(\rho_2)$. Therefore, $\tr_\A(\rho_1) \perp \tr_\A(\rho_2)$ because $Q \perp 1_\B-Q$.

Assume that $P_1 = P_2$ and $Q_1 = Q_2$. We have that $(1_\A \tensor Q_1) \rho = \rho$. For all projections $P \in \A$, let $Q \in \B$ be a projection such that $(P \tensor 1_\B)\rho = (1_\A \tensor Q)\rho$. We reason that $ (P \tensor 1_\B)\rho = (1_\A \tensor Q)\rho = (1_\A \tensor Q) (1_\A \tensor Q_1)\rho = (1_\A \tensor Q Q_1) \rho. $ Hence, for all projections $P \in \A$ and, in particular, for all projections $P \leq P_1$, there exists a projection $Q \leq Q_1$ such that $(P \tensor 1_\B) \rho = (1_\A \tensor Q) \rho$. It follows that $\rho \in P_1 \A \tensor Q_1 \B$ is a right-Bell state. Of course, $P_1 \A \iso \M_n$ and $Q_1 \B \iso \M_m$ for some $m, n \geq 1$.

We assume without loss of generality that $\A = \M_n$ and $\B = \M_m$ and argue as in the proof of Theorem~\ref{2.4}. Let $\varepsilon \: \CC^n \tensor \CC^n \to \CC$ be the counit. For each projection $P \in \A$, there is a projection $Q \in \B$ such that $(P \tensor 1_\B) \rho = (1_\A \tensor Q)\rho$ and, by \cref{2.3}, such that $(P \tensor 1_\B) \rho_i = (1_\A \tensor Q)\rho_i$ for both $i \in \{1,2\}$. Writing $\rho_i = |\psi_i\> \<\psi_i|$, we infer that $(P \tensor 1_\B) |\psi_i\> = (1_\A \tensor Q) |\psi_i\>$ for both $i \in \{1,2\}$. Reasoning as in the proof of Theorem~\ref{2.4}, we infer that $V_i \overline P = Q V_i$ for both $i \in \{1, 2\}$, where $V_i = \sqrt n (\varepsilon \tensor 1_m) (1_n \tensor |\psi_i\>)$. Applying this conclusion, we find that for all projections $P \in \A$, there is a projection $Q \in \B$ such that $V_1^\dagger V_2 \overline P = V_1^\dagger Q V_2 = \overline P V_1^\dagger V_2$. Thus, $V_1^\dagger V_2$ is in the center of $\A$ and, hence, a scalar multiple of the identity $1_n \in \A$.

We now compute that
$$
0 = \tr(\rho_1 \rho_2) = | \< \psi_1 |\psi_2\>|^2 = \frac 1 {n^2} |\tr(V_1^\dagger V_2)|^2.
$$
Since $V_1^\dagger V_2$ is a scalar, we find that $V_1^\dagger V_2 = 0$. Reasoning graphically, it is easy to see that $\tr_\A(\rho_i) = \frac 1 n V_i V_i^\dagger$ for both $i \in \{1,2\}$. Therefore, $\tr_\A(\rho_1) \perp \tr_\A(\rho_2)$. Having reached this conclusion in all three cases, we have proved $(1) \Rightarrow (2)$.

We prove $(2) \Rightarrow (1)$. Assume that the states $\tr_\A(\rho_i)$, for $1 \leq i \leq k$, are pairwise-orthogonal, and let $P \in \A$ be a projection. For each index $i$, there exists a projection $Q_i \in \B$ such that $(P \tensor 1_\B)\rho_i = (1_\A \tensor Q_i)\rho_i$. We now construct a projection $Q \in \B$ such that $(P \tensor 1_\B)\rho_i = (1_\A \tensor Q)\rho_i$ for all $i$.

For each index $i$, let $R_i \in \B$ be the support projection of the state $\tr_\A(\rho_i)$. In other words, $R_i$ is the projection onto the range of $\tr_\A(\rho_i)$. For $i \neq j$, we calculate that
$$
\tr_\A((1_\A \tensor R_i)\rho_j) = R_i \tr_\A(\rho_j) = 0,
$$
which implies that $(1_\A \tensor R_i)\rho_j = 0$. Similarly, we calculate that
$$
\tr_\A((1_\A \tensor (1_\B - R_i))\rho_i) = (1_\B - R_i)\tr_\A(\rho_i) = 0,
$$
which implies that $(1_\A \tensor R_i)\rho_i = \rho_i$. It follows that
\begin{align*}&
(1_\A \tensor (Q_i \wedge R_i))\rho_i = \lim_{l \to \infty} (1_\A \tensor Q_i R_i)^l \rho_i = \lim_{l \to \infty} (P \tensor 1_\B) \rho_i = (P \tensor 1_\B) \rho_i.
\end{align*}
This computation uses a well-known formula of von Neumann for the greatest lower bound of two projections \cite[Lemma~22]{vonNeumann1949}.

The support projections $R_i$, for $1 \leq i \leq k$, are pairwise-orthogonal because the states $\tr_\A(\rho_i)$ are pairwise-orthogonal. It follows that $Q \coloneqq \sum_{i=1}^k Q_i \wedge R_i$ is a projection in $\B$. We now calculate that
\begin{align*}
(1_\A \tensor Q)\rho
&=
\sum_{i,j = 1}^k p_j (1_\A \tensor (Q_i \wedge R_i)) \rho_j
=
\sum_{i = 1}^k p_i (1_\A \tensor (Q_i \wedge R_i)) \rho_i
\\ & =
\sum_{i = 1}^k p_i (P \tensor 1_\B) \rho_i = (P \tensor 1_\B)\rho.
\end{align*}
Therefore, $\rho$ is a right-Bell state. We have proved $(2) \Rightarrow (1)$.
\end{proof}

We might paraphrase \cref{2.5} as saying that two orthogonal right-Bell states $\rho_1$ and $\rho_2$ mix into a right-Bell state iff they can be distinguished using a measurement on $\B$. The theorem fails if $\rho_1$ and $\rho_2$ are not orthogonal to each other. For example, if $\rho \in \M_2 \tensor \M_2$ is a Bell state, then the mixtures of $\rho_1 = \rho \tensor |\psi_1\>\<\psi_1|$ and $\rho_2 = \rho \tensor |\psi_2\>\<\psi_2|$ are right-Bell states for all unit vectors $\psi_1, \psi_2 \in \CC^2$.

The following corollary combines \cref{2.5} and \cref{2.4}, generalizing the latter to all states $\rho \in \A \tensor \B$ when $\A$ and $\B$ are simple. It is straightforward to obtain a further generalization to arbitrary finite-dimensional $C^*$-algebras $\A$ and $\B$, but the formal statement is more complicated than enlightening.

\begin{corollary}\label{2.6}
Let $\A = \M_n$, let $\B= \M_m$, and let $\rho \in \A \tensor \B$ be a state. Then, $\rho$ is right-Bell iff there exist
\begin{enumerate}
\item scalars $p_1, \ldots, p_k \in (0, 1]$,
\item isometries $V_1, \ldots, V_k \: \CC^n \to \CC^m$,
\end{enumerate}
such that $p_1 + \cdots + p_k = 1$, such that $V_r^\dagger V_s = 0$ whenever $r \neq s$, and such that
$$
\rho = \sum_{r=1}^k \sum_{i,j=1}^n \frac {p_r} n E_{ij} \tensor V_r E_{ij} V_r^\dagger.
$$
\end{corollary}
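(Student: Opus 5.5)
We prove both directions by combining \cref{2.4} with \cref{2.5}.

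For the forward direction, assume that $\rho$ is right-Bell. Take a spectral decomposition $\rho = \sum_{r=1}^k p_r \rho_r$, where the $p_r > 0$ are the nonzero eigenvalues listed with multiplicity and the $\rho_r = |\psi_r\>\<\psi_r|$ are pairwise-orthogonal pure states. Each $\rho_r$ is right-Bell by \cref{2.3}. By \cref{2.4}, for each $r$ we have $n \leq m$ and an isometry $V_r\: \CC^n \to \CC^m$ such that
$$\rho_r = \frac 1 n \sum_{i,j=1}^n E_{ij} \tensor V_r E_{ij} V_r^\dagger.$$
Substituting gives the displayed formula, and $\sum_r p_r = \tr(\rho) = 1$.

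It remains to show that $V_r^\dagger V_s = 0$ for $r \neq s$. Since $\rho$ is right-Bell and the $\rho_r$ are pairwise-orthogonal right-Bell states, \cref{2.5} shows that the marginals $\tr_\A(\rho_r)$ are pairwise-orthogonal. Computing the partial trace directly,
$$\tr_\A(\rho_r) = \frac 1 n \sum_{i} V_r E_{ii} V_r^\dagger = \frac 1 n V_r V_r^\dagger,$$
as already noted in the proof of \cref{2.5}. Orthogonality of these marginals gives $V_r V_r^\dagger V_s V_s^\dagger = 0$. Multiplying on the left by $V_r^\dagger$ and on the right by $V_s$, and using $V_r^\dagger V_r = 1_n = V_s^\dagger V_s$, yields $V_r^\dagger V_s = 0$.

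For the converse, suppose that $p_r$ and $V_r$ are given as in the statement. Define $\psi_r = n^{-1/2} \sum_i |e_i\> \tensor V_r|e_i\>$ and $\rho_r = |\psi_r\>\<\psi_r|$. Then each $\rho_r$ is a right-Bell pure state by \cref{2.4}, via implication (3)$\Rightarrow$(1). For $r \neq s$,
$$\<\psi_r|\psi_s\> = \frac 1 n \tr(V_r^\dagger V_s) = 0,$$
so the $\rho_r$ are pairwise-orthogonal. Their marginals are $\frac 1 n V_r V_r^\dagger$, and
$$V_r V_r^\dagger V_s V_s^\dagger = V_r (V_r^\dagger V_s) V_s^\dagger = 0,$$
so the marginals are pairwise-orthogonal as well. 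Since $\rho = \sum_r p_r \rho_r$, \cref{2.5}, via implication (2)$\Rightarrow$(1), shows that $\rho$ is right-Bell.

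The only delicate point is making sure that \cref{2.5} applies, that is, that the decomposition consists of pairwise-orthogonal right-Bell states. The spectral decomposition handles this in the forward direction, and the inner product computation handles it in the converse. The rest is the routine partial-trace identity $\tr_\A(\rho_r) = \frac 1 n V_r V_r^\dagger$.
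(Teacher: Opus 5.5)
Your proposal is correct and follows the paper's proof essentially step for step. In the forward direction you use the spectral decomposition with \cref{2.3} and \cref{2.4} to obtain the isometries, then \cref{2.5} to get orthogonality of the marginals $\frac1n V_r V_r^\dagger$; in the converse you use \cref{2.4} together with \cref{2.5}, (2)$\Rightarrow$(1). The only difference is that you spell out verifications the paper leaves implicit: deriving $V_r^\dagger V_s = 0$ from $V_r V_r^\dagger V_s V_s^\dagger = 0$, and computing $\langle \psi_r | \psi_s \rangle = \frac1n \tr(V_r^\dagger V_s)$.
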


\begin{proof}
Assume that $\rho$ is right-Bell. Diagonalization immediately yields a decomposition $\rho= p_1 \rho_1 + \cdots + p_k \rho_k$ where $p_1, \ldots, p_k \in (0, 1]$ satisfy $p_1 + \cdots + p_k = 1$ and $\rho_1, \ldots, \rho_k \in \A \tensor \B$ are pairwise-orthogonal pure states. By Theorem~\ref{2.4}, each pure state $\rho_r$ is of the form $\rho_r = \frac 1 n \sum_{i,j=1}^n E_{ij} \tensor V_r E_{ij} V_r^\dagger$ for some isometry $V_r \: \CC^n \to \CC^m$, and by Theorem~\ref{2.5}, the states $\tr_\A(\rho_r) = \frac 1 n V_r V_r^\dagger$ are pairwise-orthogonal. We have proved the forward implication.

For the other implication, assume that there exist scalars $p_1, \ldots, p_k \in (0,1]$ and isometries $V_1, \ldots, V_k \: \CC^n \to \CC^m$ that satisfy the conclusion of the corollary. For each $1 \leq r \leq k$, let $\rho_r = \frac 1 n \sum_{i,j=1}^n E_{ij} \tensor V_r E_{ij} V_r^\dagger$. The pure states $\rho_r$ are right-Bell by Theorem~\ref{2.4} and pairwise-orthogonal because $V_r^\dagger V_s = 0$ whenever $r \neq s$. Furthermore, the states $\tr_\A(\rho_r) = \frac 1 n V_r V_r^\dagger$ are pairwise-orthogonal for the same reason. Theorem~\ref{2.5} now implies that the state $\rho$ is right-Bell. We have proved the backward implication.
\end{proof}

We now conclude this section with a variant of \cref{2.6} for Bell states. The structure of Bell pure states is well known; it is implied by \cref{2.4}. The following corollary observes that every Bell state is pure.

\begin{corollary}\label{2.7}
Let $\A = \M_n$, let $\B= \M_m$, and let $\rho \in \A \tensor \B$ be a state. Then, $\rho$ is Bell iff $n = m$ and there exists a unitary $U \in \M_n$ such that
$$
\rho = \frac 1 n \sum_{i,j=1}^n E_{ij} \tensor U E_{ij} U^\dagger.
$$
\end{corollary}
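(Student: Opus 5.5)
We plan to deduce the corollary from \cref{2.6}, applied once to $\rho$ as a state on $\A \tensor \B$ and once, via the flip isomorphism, to $\rho$ viewed as a state on $\B \tensor \A$. Since $\rho$ is right-Bell, \cref{2.6} gives $n \leq m$ together with scalars $p_1, \ldots, p_k$ and isometries $V_1, \ldots, V_k\: \CC^n \to \CC^m$ with pairwise-orthogonal ranges such that $\rho = \sum_r \frac{p_r}{n}\sum_{i,j} E_{ij} \tensor V_r E_{ij} V_r^\dagger$. Applying the flipped version to the left-Bell property gives $m \leq n$. Hence $n = m$, every $V_r$ is unitary, and pairwise orthogonality of the ranges forces $k = 1$, because the range of a unitary $V_1$ is all of $\CC^n$ and $V_1^\dagger V_2 = 0$ would then make $V_2 = 0$. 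Taking $U = V_1$ yields the stated form.

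The one point needing care is the flip. If $\sigma\: \A \tensor \B \to \B \tensor \A$ is the swap $\dagger$-isomorphism, then $\rho$ is left-Bell iff $\sigma(\rho)$ is right-Bell, since condition (1) of \cref{2.1} is symmetric in the two factors after swapping. Also, \cref{2.6} yields $n \leq m$ implicitly, because an isometry $\CC^n \to \CC^m$ exists only when $n \leq m$; the same remark applied to $\sigma(\rho)$ gives $m \leq n$. I do not expect any serious obstacle here.

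For the converse, suppose $n = m$ and $\rho = \frac1n\sum_{i,j} E_{ij}\tensor UE_{ij}U^\dagger$. Then $\rho$ is right-Bell by \cref{2.4}, applied with the isometry $V = U$. For left-Bell, conjugate by $1 \tensor U^\dagger$ to reduce to $U = 1$. In that case the state is symmetric under the swap, so it is left-Bell by the identity $(P\tensor 1)\rho = (1\tensor \overline P)\rho$ from the proof of \cref{2.4}, applied with the roles exchanged. Alternatively, one can check directly that $(1 \tensor Q)\rho = (\overline{U^\dagger Q U} \tensor 1)\rho$. Conjugating by a local unitary preserves both Bell conditions, since it simply transports projections.
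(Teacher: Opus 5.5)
Your proposal is correct and follows the paper's proof. The forward direction applies \cref{2.6} to $\rho$ and, via the flip, to its swap to get isometries $\CC^n \to \CC^m$ and $\CC^m \to \CC^n$, hence $n=m$, every $V_r$ is unitary, and $k=1$; the converse reduces by a local unitary to the standard Bell state, and you spell out the flip and local-unitary transport that the paper calls trivial.
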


\begin{proof}
The backward direction is trivial because the conclusion of the corollary is equivalent to the equation
$$
(1_n \tensor U)^\dagger \rho (1_n \tensor U) = \frac 1 n \sum_{i,j=1}^n E_{ij} \tensor E_{ij},
$$
and the right side is the standard Bell state. For the forward direction, assume that $\rho$ is a Bell state. By Corollary~\ref{2.6}, there exist isometries $\CC^n \to \CC^m$ and $\CC^m \to \CC^n$, so $n = m$. Furthermore, by the same Corollary~\ref{2.6}, there exist scalars $p_1, \ldots, p_k \in (0, 1]$ and isometries $V_1, \ldots, V_k \: \CC^n \to \CC^n$ such that $p_1 + \cdots + p_k = 1$, such that $V_i^\dagger V_j = 0$ whenever $i \neq j$, and such that
$$
\rho = \sum_{r=1}^k \sum_{i,j=1}^n \frac {p_r} n E_{ij} \tensor V_r E_{ij} V_r^\dagger.
$$
The isometries $V_1, \ldots, V_k$ are unitaries, so $k=1$ and $p_1 = 1$. This proves the forward implication.
\end{proof}

\section{Entropy}\label{section 3}

This section establishes that both the Bell states $\rho \in \A \tensor \B$ and adjusted von Neumann entropy $\tilde S$ are characterized in terms of each other by the equations $$\tilde S (\tr_\A(\rho)) = \tilde S(\rho) = \tilde S(\tr_\B(\rho)).$$
We begin this section by reviewing the definitions of von Neumann entropy and adjusted von Neumann entropy. We follow the standard convention that $0\log 0 = 0$ throughout.

\begin{definition}[\cite{vonNeumann1932,Segal1960}]
Let $\A$ be a finite-dimensional $C^*$-algebra. Then, the \emph{von Neumann entropy} of a state $\rho \in \A$ is the real number
$$
S(\rho) = - \tr (\rho \log \rho).
$$
\end{definition}

Von Neumann entropy satisfies the following inequality for mixtures of states, which directly generalizes the same inequality for Shannon entropy.

\begin{theorem}[{\cite[Theorem 11.10]{NielsenChuang2010}}]\label{3.2}
Let $\A$ be a finite-dimensional $C^*$-algebra. Let $p_1, \ldots, p_k \in (0,1]$ sum to $1$, and let $\rho_1, \ldots, \rho_k \in \A$ be states. For the state $\rho = p_1 \rho_1 + \cdots + p_k \rho_k$,
$$
S(\rho) \leq \sum_{i=1}^k p_i S(\rho_i) - \sum_{i=1}^k p_i \log p_i.
$$
Furthermore, this inequality is an equality iff the states $\rho_i$, for $1 \leq i \leq k$, are pairwise-orthogonal.
\end{theorem}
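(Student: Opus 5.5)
The statement is Theorem~\ref{3.2}: for a mixture $\rho = \sum_i p_i \rho_i$ of states,
$$S(\rho) \leq \sum_i p_i S(\rho_i) + S(p),$$
with equality iff the $\rho_i$ are pairwise orthogonal. I would prove it by purification, or more simply by an ancilla argument with a classical register. Introduce the block-diagonal state
$$\sigma = \sum_i p_i \rho_i \tensor E_{ii} \in \A \tensor \CC^k .$$
Its spectrum is the union of the spectra of the $p_i\rho_i$, so a direct computation gives
$$S(\sigma) = \sum_i p_i S(\rho_i) + S(p).$$
Its marginal on $\A$ is exactly $\rho$. The inequality then follows from the monotonicity of von Neumann entropy for a classical second factor, namely $S(\rho) \leq S(\sigma)$ when $\sigma$ is a quantum--classical state. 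I would prove this as concavity of $S$. Indeed,
$$S(\sigma) - \sum_i p_i S(\rho_i) = S(p) \geq 0,$$
so what is really needed is the statement $S(\rho) - \sum_i p_i S(\rho_i) \leq S(p)$.

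For that upper bound I would use the standard relative-entropy argument. Write
$$S(p) - \Big(S(\rho) - \sum_i p_i S(\rho_i)\Big) = D(\sigma \,\|\, \rho \tensor \mathrm{diag}(p)) \cdot (-1)\ldots$$
but rather than fuss with signs, the cleaner route is to observe that this quantity equals $D(\sigma \,\|\, \rho\tensor 1_k) \ge 0$ after expansion. Equivalently, use operator monotonicity of $\log$: since $p_i\rho_i \leq \rho$, we have $\log(p_i\rho_i) \leq \log \rho$ on the support of $\rho_i$. Hence
$$-\tr(p_i\rho_i \log \rho) \leq -\tr(p_i \rho_i \log(p_i\rho_i)).$$
Summing over $i$ gives
$$S(\rho) \leq \sum_i \big(-p_i\log p_i + p_i S(\rho_i)\big).$$
Care is needed only on supports: restrict to the support of $\rho$, where $\log$ is defined.

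The equality case is the main obstacle. If the $\rho_i$ are pairwise orthogonal, then $\rho$ is block diagonal with orthogonal supports, so its eigenvalues are the $p_i\lambda$ for $\lambda$ in the spectrum of $\rho_i$, and equality is a direct computation. Conversely, suppose equality holds. Then each summed inequality $\tr(\rho_i(\log \rho - \log(p_i\rho_i))) \geq 0$ must be an equality, i.e.\ for each $i$ we have $\tr(\rho_i\log\rho) = \tr(\rho_i \log(p_i\rho_i))$. Write $\rho = p_i\rho_i + \tau_i$ with $\tau_i \ge 0$. I would use strict operator monotonicity in the following form: compressing $\log(p_i\rho_i+\tau_i) - \log(p_i\rho_i)$ to the support of $\rho_i$ yields a positive operator that vanishes only if $\tau_i$ vanishes on that support. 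This gives $\tau_i R_i = 0$, where $R_i$ is the support projection of $\rho_i$, and therefore $\rho_j R_i = 0$ for $j\ne i$, which is orthogonality. The delicate point is justifying this strictness. I would first try to get it via the integral representation
$$\log x = \int_0^\infty \big((1+t)^{-1} - (x+t)^{-1}\big)\,dt,$$
for which strict monotonicity of the resolvent is elementary. Alternatively, I would use the equality condition in Klein's inequality for the relative entropy $D(\sigma\|\rho\tensor 1_k)$, which vanishes only if $\sigma = \rho\tensor 1_k$ restricted appropriately.
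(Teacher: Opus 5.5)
\textbf{Comparison with the cited proof.} The paper does not prove this statement; it cites Nielsen--Chuang, Theorem~11.10. That argument first treats pure $\rho_i = |\psi_i\rangle\langle\psi_i|$ by purifying to $\sum_i \sqrt{p_i}\,|\psi_i\rangle \tensor |i\rangle$. Then $S(\rho)$ equals the entropy of the register's reduced state, whose diagonal is $p$. Dephasing the register does not decrease entropy, with equality iff that state was already diagonal (Klein's inequality and its equality case, applied to two normalized states). This gives $S(\rho) \le S(p)$, with equality iff the $\psi_i$ are orthogonal. The mixed case follows by applying the pure case to the ensemble of all eigenvectors of all the $\rho_i$.

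Your working route is different and more direct. From $p_i\rho_i \le \rho$ you get the termwise bound $\tr(\rho_i \log \rho) \ge \tr(\rho_i \log(p_i\rho_i))$, and summing gives the inequality, with no purification and no pure-to-mixed reduction. Your reduction of the converse is also correct: termwise equality, then $R_i\tau_i R_i = 0$, then $\rho_j R_i = 0$. The ancilla $\sigma$ plays no role in this argument; the ``monotonicity for a classical second factor'' you invoke is just a restatement of the theorem.

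\textbf{Points that need repair.} First, the relative-entropy aside is wrong, not merely a sign issue. Since $S(\sigma) - S(\rho) = -D(\sigma\|\rho \tensor 1_k)$, what you need is $D(\sigma\|\rho\tensor 1_k) \le 0$. The claimed $D(\sigma\|\rho\tensor 1_k)\ge 0$ is false: if all $\rho_i$ coincide, it equals $-S(p)$. Klein's inequality and its equality condition do not apply, because $\tr(\rho\tensor 1_k) = k$. So the word ``equivalently'' and your Klein-based fallback for the equality case must be dropped.

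Second, $p_i\rho_i$ is singular, so operator monotonicity of $\log$ does not apply verbatim. The strict version you need is exactly the content of the converse, and you leave it as a plan. Your integral-representation plan does work, and it handles both issues at once. Take logarithms on supports and use $R_i \le \mathrm{supp}(\rho)$, which follows from $p_i\rho_i \le \rho$. Then
\[
R_i\bigl(\log\rho - \log(p_i\rho_i)\bigr)R_i = \int_0^\infty R_i\bigl((p_i\rho_i + t)^{-1} - (\rho + t)^{-1}\bigr)R_i \, dt .
\]
For each $t > 0$, write $C = p_i\rho_i + t$ and $T = C^{-1/2}\tau_i C^{-1/2}$. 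The integrand equals $R_i C^{-1/2} T(1+T)^{-1} C^{-1/2} R_i \ge 0$. It vanishes iff $\tau_i R_i = 0$, because $C$ is invertible and commutes with $R_i$.

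Positivity of the integrand gives the inequality. If $\tau_i R_i \neq 0$, the integral is a nonzero positive operator supported in $R_i$. Since $\rho_i$ is faithful on $R_i$, its pairing with $\rho_i$ is then strictly positive, which is the strictness you need. With these repairs the proof is complete.
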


Adjusted von Neumann entropy is defined in terms of the counting weight on a finite-dimensional $C^*$-algebra. Explicitly, the \emph{counting weight} on
$$\A = \M_{n_1} \oplus \cdots \oplus \M_{n_s}$$
is the operator
$$
\zeta_\A = n_1 1_{n_1} \oplus \cdots \oplus n_{s} 1_{n_s}.
$$

\begin{definition}[\cite{Segal1960,Kornell2025}]\label{3.3}
Let $\A$ be a finite-dimensional $C^*$-algebra. Then, the \emph{adjusted von Neumann entropy} of a state $\rho \in \A$ is
$$
\tilde S (\rho) = S(\rho) + \tr(\rho \log \zeta_\A).
$$
\end{definition}

We remark that adjusted von Neumann entropy can be expressed in terms of Umegaki relative entropy \cite[sec.~4]{Umegaki1962} via $\tilde S(\rho) = \log(\dim \A) - S(\rho || \zeta_\A /\dim \A)$ \cite[Remark~1.5]{Kornell2025}. The additive constant $\log(\dim \A)$ is significant for \cite[Theorem~1.1]{Kornell2025} and, similarly, for the main results of the present paper.

We now prove a sequence of propositions toward \cref{3.8}, which relates adjusted von Neumann entropy to right-Bell states. The following proposition expresses that discarding a part of a bipartite system can never increase the entropy of that system. The corresponding inequality for von Neumann entropy is undoubtedly less natural. In the case that $\A = \M_n$ and $\B = \M_m$, it follows from the Araki-Lieb inequality \cite[eq.~3.1]{ArakiLieb1970}.

\begin{proposition}\label{3.4}
Let $\A$ and $\B$ be finite-dimensional $C^*$-algebras, and let $\rho \in \A \tensor \B$ be a state. Then,
\begin{enumerate}
\item $\tilde S(\tr_\A(\rho)) \leq \tilde S(\rho)$,
\item $S(\tr_\A(\rho)) \leq S(\rho) + \tr(\rho(\log\zeta_\A \tensor 1_\B) )$.
\end{enumerate}
\end{proposition}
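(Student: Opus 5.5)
We will prove (2) first and then deduce (1) from it. The deduction is pure bookkeeping. Write $\A = \M_{n_1} \oplus \cdots \oplus \M_{n_s}$ and $\B = \M_{m_1} \oplus \cdots \oplus \M_{m_t}$, so that $\A \tensor \B = \bigoplus_{i,j} \M_{n_i} \tensor \M_{m_j}$. Then $\zeta_{\A \tensor \B} = \zeta_\A \tensor \zeta_\B$, hence
$$\log \zeta_{\A\tensor\B} = \log\zeta_\A \tensor 1_\B + 1_\A \tensor \log \zeta_\B.$$
It follows that
$$\tr(\rho \log\zeta_{\A\tensor\B}) = \tr(\rho(\log\zeta_\A\tensor 1_\B)) + \tr(\tr_\A(\rho)\log\zeta_\B).$$
Adding $\tr(\tr_\A(\rho)\log \zeta_\B)$ to both sides of (2) therefore yields exactly (1).

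For (2), decompose $\rho$ along the central blocks as $\rho = \bigoplus_{i,j} q_{ij}\sigma_{ij}$. Here $q_{ij} = \tr((z_i \tensor w_j)\rho)$ for the minimal central projections $z_i \in \A$ and $w_j \in \B$, and each $\sigma_{ij}$ is a state in $\M_{n_i}\tensor\M_{m_j}$; blocks with $q_{ij}=0$ are dropped. The $\sigma_{ij}$ are pairwise orthogonal, so the equality case of \cref{3.2} gives
$$S(\rho) = \sum_{i,j} q_{ij} S(\sigma_{ij}) + H(q),$$
where $H$ denotes Shannon entropy. Also $\tr(\rho(\log\zeta_\A\tensor 1_\B)) = \sum_{i,j} q_{ij}\log n_i$.

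Now the other side. The marginal is $\tr_\A(\rho) = \bigoplus_j \sum_i q_{ij}\tr_\A(\sigma_{ij})$. The summands for distinct $j$ are orthogonal, so the equality case of \cref{3.2} applies across $j$. Within a fixed block $j$ the summands need not be orthogonal, and we use the inequality of \cref{3.2} there. Writing $q_j = \sum_i q_{ij}$, these two steps combine to give
$$S(\tr_\A(\rho)) \le \sum_{i,j} q_{ij} S(\tr_\A(\sigma_{ij})) + H(q_j) + \sum_j q_j H(q_{\cdot|j}) = \sum_{i,j} q_{ij}S(\tr_\A(\sigma_{ij})) + H(q).$$
This reduces (2) to the single-factor inequality
$$S(\tr_\A(\sigma)) \le S(\sigma) + \log n \qquad \text{for states } \sigma \in \M_n \tensor \M_m.$$
We get this from the Araki–Lieb inequality $S(\tr_\A\sigma) \le S(\sigma) + S(\tr_\B\sigma)$, proved in the standard way by purifying $\sigma$ and applying subadditivity, together with $S(\tr_\B\sigma) \le \log n$. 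Multiplying by $q_{ij}$ and summing completes the proof of (2).

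The main obstacle is the bookkeeping in the previous paragraph. One has to ensure that the Shannon term produced by the non-orthogonal mixing within each $\B$-block, combined with the term from the block decomposition of $\tr_\A(\rho)$, exactly matches the Shannon term $H(q)$ in the decomposition of $S(\rho)$. The chain-rule identity $H(q) = H(q_j) + \sum_j q_j H(q_{\cdot|j})$ is what makes this work. Araki–Lieb itself is standard and may simply be cited.
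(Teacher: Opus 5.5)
Your proof is correct, but it runs in the opposite direction from the paper's and uses different tools.

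\textbf{The paper's route.} The paper proves (1) in one line by citing \cite[Theorem~1.1]{Kornell2025}. The partial trace $\tr_\A$ is the pullback of states along the unital inclusion $\B \to \A \tensor \B$, and pullback along a unital $\dagger$-homomorphism never increases $\tilde S$. It then obtains (2) from (1) by the same bookkeeping identity $\log \zeta_{\A\tensor\B} = \log\zeta_\A \tensor 1_\B + 1_\A \tensor \log\zeta_\B$ that you use.

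\textbf{Your route.} You prove (2) directly, as follows.
\begin{itemize}
\item You decompose $\rho$ over the central blocks $\M_{n_i}\tensor\M_{m_j}$.
\item You use the equality case of \cref{3.2} across the orthogonal blocks of $\B$, and the inequality case within each block.
\item You reconcile the Shannon terms with the chain rule.
\item You finish on each simple block with Araki--Lieb together with $S(\tr_\B\sigma) \le \log n$.
\end{itemize}
Each step checks out, including the accounting $H(q) = H(q_\cdot) + \sum_j q_j H(q_{\cdot|j})$. In effect, you extend the paper's remark that the case $\A = \M_n$, $\B = \M_m$ follows from Araki--Lieb to arbitrary finite-dimensional $C^*$-algebras.

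\textbf{Comparison.} Your argument is self-contained and elementary: it needs only textbook inequalities, not an external characterization theorem from noncommutative geometry. The paper's argument is shorter and places monotonicity as one instance of the principle that motivates the paper. That principle says $\tilde S$ is nonincreasing, even completely so, under pullback along any unital $\dagger$-homomorphism. Your block-by-block argument does not directly give that more general statement.
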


\begin{proof}
Claim 1 follows by \cite[Theorem~1.1]{Kornell2025} because the trace-preserving completely positive map $\tr_\A\: \A \tensor \B \to \B$ is adjoint to the inclusion unital $\dagger$-homomorphism $\B \to \A \tensor \B$. Claim 2 is equivalent to claim 1 because
\begin{align*}
\tilde S(\tr_\A(\rho))
& =
S(\tr_\A(\rho)) + \tr( \tr_\A(\rho) \log \zeta_\B)
\\ &=
S(\tr_\A(\rho)) + \tr( \tr_\A(\rho(1_\A \tensor \log \zeta_\B)))
\\ & =
S(\tr_\A(\rho)) + \tr (\rho (1_\A \tensor \log \zeta_\B)),
\end{align*}
\begin{align*}
\tilde S(\rho)
& =
S(\rho) + \tr(\rho \log \zeta_{\A \tensor \B})
=
S(\rho) + \tr(\rho \log (\zeta_\A \tensor \zeta_\B))
\\ & =
S(\rho) + \tr(\rho(\log \zeta_\A \tensor 1_\B) ) + \tr (\rho (1_\A \tensor \log \zeta_\B)).
\end{align*}
\end{proof}

We verify that \cref{3.2} holds for adjusted von Neumann entropy as well.

\begin{proposition}\label{3.5}
Let $\A$ be a finite-dimensional $C^*$-algebra. Let $p_1, \ldots, p_k \in (0,1]$ sum to $1$, and let $\rho_1, \ldots, \rho_k \in \A$ be states. For $\rho = p_1 \rho_1 + \cdots + p_k \rho_k$,
$$
\tilde S(\rho) \leq \sum_{i=1}^k p_i \tilde S(\rho_i) - \sum_{i=1}^k p_i \log p_i.
$$
Furthermore, this inequality is an equality iff the states $\rho_i$, for $1 \leq i \leq k$, are pairwise-orthogonal.
\end{proposition}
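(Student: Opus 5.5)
The plan is to reduce the claim to \cref{3.2} by observing that the correction term in \cref{3.3} is affine in the state. Specifically, the map $\rho \mapsto \tr(\rho \log \zeta_\A)$ is linear on $\A$. Note that $\log \zeta_\A$ is a well-defined self-adjoint element of $\A$, since $\zeta_\A$ is positive and invertible. Hence for $\rho = p_1\rho_1 + \cdots + p_k\rho_k$ we have
$$
\tr(\rho \log \zeta_\A) = \sum_{i=1}^k p_i \tr(\rho_i \log \zeta_\A).
$$

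Next, I add this identity to the inequality of \cref{3.2}:
$$
S(\rho) \leq \sum_{i=1}^k p_i S(\rho_i) - \sum_{i=1}^k p_i \log p_i.
$$
The left side then becomes $S(\rho) + \tr(\rho\log\zeta_\A) = \tilde S(\rho)$. The right side becomes
$$
\sum_{i=1}^k p_i \bigl(S(\rho_i) + \tr(\rho_i \log \zeta_\A)\bigr) - \sum_{i=1}^k p_i \log p_i = \sum_{i=1}^k p_i \tilde S(\rho_i) - \sum_{i=1}^k p_i \log p_i.
$$
This yields the desired inequality.

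For the equality clause, the two inequalities differ by the same finite real number added to both sides. Therefore equality holds in one iff it holds in the other. By \cref{3.2}, this happens iff the $\rho_i$ are pairwise-orthogonal.

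I do not expect a real obstacle. The only point to check is that \cref{3.2} applies in a general finite-dimensional $C^*$-algebra $\A \iso \M_{n_1}\oplus\cdots\oplus\M_{n_s}$ with the unnormalized trace described earlier. This holds by embedding $\A$ block-diagonally into $\M_{n_1+\cdots+n_s}$. That embedding preserves the trace, the functional calculus (and so $\rho\log\rho$), and orthogonality of states, so the matrix version of \cref{3.2} transfers verbatim. In any case, \cref{3.2} is stated for such $\A$ and may be assumed.
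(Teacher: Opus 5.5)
Your proposal is correct and is exactly the paper's argument: the paper also derives the proposition as an immediate corollary of \cref{3.2}, using the linearity identity $\sum_{i=1}^k p_i \tr(\rho_i \log \zeta_\A) = \tr(\rho \log \zeta_\A)$. The equality clause follows in the same way, since both sides are shifted by the same finite quantity.
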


\begin{proof}
This proposition is an immediate corollary of Theorem~\ref{3.2} because
$$
\sum_{i = 1}^k p_i \tr(\rho_i \log \zeta_\A) = \tr(\rho \log \zeta_\A).
$$
\end{proof}

We now prove counterparts to Theorems~\ref{2.5} and \ref{2.4}, in that order.

\begin{lemma}[{cf.~\cite[Exercise~11.16]{NielsenChuang2010}}]\label{3.6}
Let $\A$ and $\B$ be finite-dimensional $C^*$-algebras. Let
$$
\rho = p_1 \rho_1 + \cdots + p_k \rho_k
$$
for pairwise-orthogonal states $\rho_1, \ldots, \rho_k \in \A \tensor \B$ and positive real numbers $p_1, \ldots, p_k$ that sum to $1$. The following are equivalent:
\begin{enumerate}
\item $\tilde S(\tr_\A(\rho)) = \tilde S(\rho)$,
\item the states $\tr_\A(\rho_i)$, for $1 \leq i \leq k$, are pairwise-orthogonal and satisfy $\tilde S(\tr_\A(\rho_i)) = \tilde S(\rho_i)$.
\end{enumerate}
\end{lemma}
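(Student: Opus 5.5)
Write $\sigma_i = \tr_\A(\rho_i) \in \B$ and $\sigma = \tr_\A(\rho) = p_1\sigma_1 + \cdots + p_k\sigma_k$. The idea is to squeeze $\tilde S(\tr_\A(\rho))$ and $\tilde S(\rho)$ together through a single chain of inequalities, and then read off when every link in the chain is an equality.

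First, since the $\rho_i$ are pairwise-orthogonal, the equality case of \cref{3.5} gives
$$
\tilde S(\rho) = \sum_{i=1}^k p_i \tilde S(\rho_i) - \sum_{i=1}^k p_i \log p_i.
$$
Second, applying the inequality of \cref{3.5} in $\B$ to the mixture $\sigma$ gives
$$
\tilde S(\sigma) \leq \sum_{i=1}^k p_i \tilde S(\sigma_i) - \sum_{i=1}^k p_i\log p_i.
$$
This is an equality iff the $\sigma_i$ are pairwise-orthogonal. Third, \cref{3.4}(1) gives $\tilde S(\sigma_i) \leq \tilde S(\rho_i)$ for each $i$. Combining the three, we obtain
$$
\tilde S(\sigma) \leq \sum_{i} p_i \tilde S(\sigma_i) - \sum_i p_i \log p_i \leq \sum_{i} p_i \tilde S(\rho_i) - \sum_i p_i \log p_i = \tilde S(\rho).
$$

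For $(2)\Rightarrow(1)$: if the $\sigma_i$ are pairwise-orthogonal, the first inequality in the chain is an equality by \cref{3.5}. If in addition $\tilde S(\sigma_i) = \tilde S(\rho_i)$ for each $i$, the second inequality is an equality term by term. Hence $\tilde S(\sigma) = \tilde S(\rho)$.

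For $(1)\Rightarrow(2)$: if $\tilde S(\sigma) = \tilde S(\rho)$, then both inequalities in the chain are equalities, since the outer ends agree. Equality in the first forces the $\sigma_i$ to be pairwise-orthogonal, by the equality clause of \cref{3.5}. Equality in the second is a sum $\sum_i p_i\bigl(\tilde S(\rho_i) - \tilde S(\sigma_i)\bigr) = 0$ of nonnegative terms, each nonnegative by \cref{3.4}(1), with all weights $p_i > 0$. So every term vanishes, giving $\tilde S(\sigma_i) = \tilde S(\rho_i)$ for each $i$.

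There is no real obstacle here; everything reduces to \cref{3.4} and \cref{3.5}. Two points need care. The equality characterization in \cref{3.5} must be applied to the marginal states $\sigma_i$ in $\B$, not in $\A \tensor \B$, so we must check that its hypotheses hold there: each $\sigma_i$ is a state of $\B$, and the weights $p_i$ are positive and sum to $1$. We also need the positivity of the $p_i$ so that a vanishing weighted sum of nonnegative terms forces each term to vanish.
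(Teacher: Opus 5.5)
Your proposal is correct and follows the paper's proof: the same chain $\tilde S(\tr_\A(\rho)) \leq \sum_i p_i \tilde S(\tr_\A(\rho_i)) - \sum_i p_i \log p_i \leq \sum_i p_i \tilde S(\rho_i) - \sum_i p_i \log p_i = \tilde S(\rho)$, built from \cref{3.5} and \cref{3.4}, with the equivalence read off from when both links are equalities. Your term-by-term argument using $p_i > 0$ spells out the step the paper attributes to \cref{3.4} without comment.
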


\begin{proof}
Appealing to \cref{3.4} and Proposition~\ref{3.5}, we calculate that
\begin{align*}
\tilde S(\tr_\A(\rho))
& \leq
\sum_{i=1}^k p_i \tilde S (\tr_\A(\rho_i)) - \sum_{i=1}^k p_i \log p_i
\\ & \leq
\sum_{i=1}^k p_i \tilde S (\rho_i) - \sum_{i=1}^k p_i \log p_i = \tilde S(\rho).
\end{align*}
Thus, $\tilde S(\tr_\A(\rho)) = \tilde S(\rho)$ iff both inequalities in this calculation are equalities. By Proposition~\ref{3.5}, the first inequality is an equality iff the states $\tr_\A(\rho_i)$ are pairwise-orthogonal. By \cref{3.4}, the second inequality is an equality iff $\tilde S(\tr_\A(\rho_i)) = \tilde S(\rho_i)$ for all states $\tr_\A(\rho_i)$.
\end{proof}

\begin{proposition}\label{3.7}
Let $\A = \M_n$ and $\B = \M_m$, and let $\rho \in \A \tensor \B$ be a pure state. Then, the following are equivalent:
\begin{enumerate}
\item $\tilde S (\tr_\A(\rho)) = \tilde S (\rho)$,
\item $n \leq m$, and there is an isometry $V\: \CC^n \to \CC^m$ such that
$$
\rho = \frac 1 n \sum_{i, j = 1}^n E_{ij} \tensor V E_{ij} V^\dagger.
$$
\end{enumerate}
\end{proposition}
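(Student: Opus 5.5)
Since $\rho$ is pure, $S(\rho)=0$, and $\zeta_{\A\tensor\B} = nm\,1$, so $\tilde S(\rho) = \log n + \log m$. On the other side, $\tr_\A(\rho)\in\M_m$ and $\zeta_\B = m 1_m$, so $\tilde S(\tr_\A(\rho)) = S(\tr_\A(\rho)) + \log m$. Condition (1) therefore reduces to the purely von Neumann statement
$$S(\tr_\A(\rho)) = \log n.$$
The plan is to analyze this equation through the Schmidt decomposition of $\rho = |\psi\>\<\psi|$.

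Write $|\psi\> = \sum_{i=1}^r \sqrt{\lambda_i}\,|a_i\>\tensor|b_i\>$ with orthonormal families $a_i\in\CC^n$, $b_i\in\CC^m$, Schmidt rank $r\le\min(n,m)$, and $\lambda_i>0$ summing to $1$. Then $\tr_\A(\rho)=\sum_i\lambda_i|b_i\>\<b_i|$ has spectrum $\{\lambda_i\}$ padded with zeros, so
$$S(\tr_\A(\rho)) = -\sum_i \lambda_i\log\lambda_i \le \log r \le \log n,$$
with equality throughout iff $r=n$ and all $\lambda_i = 1/n$. In that case $n=r\le m$, and $|\psi\> = \frac1{\sqrt n}\sum_{i=1}^n |a_i\>\tensor|b_i\>$.

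To reach (2) in the forward direction, let $W$ be the unitary on $\CC^n$ with $W|e_i\>=|a_i\>$. Using the transpose trick $(W\tensor 1)\sum_i|e_i\>\tensor|e_i\> = (1\tensor W^{T})\sum_i|e_i\>\tensor|e_i\>$, we can rewrite
$$|\psi\> = \frac1{\sqrt n}\sum_{i=1}^n |e_i\>\tensor V|e_i\>,$$
where $V$ is the isometry $\CC^n\to\CC^m$ defined by $V=\bigl(\sum_i |b_i\>\<e_i|\bigr)W^{T}$. Then $\rho = |\psi\>\<\psi| = \frac1n\sum_{i,j}E_{ij}\tensor VE_{ij}V^\dagger$, which is (2).

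For the converse, given (2), the vector $\frac1{\sqrt n}\sum_i|e_i\>\tensor V|e_i\>$ is already a Schmidt decomposition with $n$ equal coefficients $1/\sqrt n$. Alternatively, one computes directly that $\tr_\A(\rho) = \frac1n VV^\dagger$, which is $1/n$ times a rank-$n$ projection. Either way $S(\tr_\A(\rho)) = \log n$, and so $\tilde S(\tr_\A(\rho)) = \tilde S(\rho)$.

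The only step needing care is rewriting the Schmidt form as a $V$ applied to the second factor: the basis $a_i$ of $\CC^n$ must be absorbed into the second factor via the transpose identity, and one must check that the resulting $V$ is an isometry. It is, because $W^{T}$ is unitary and $\sum_i|b_i\>\<e_i|$ is an isometry. Everything else is the spectral computation above.
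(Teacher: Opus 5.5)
Your proof is correct and follows essentially the same route as the paper. Both arguments reduce (1) to $S(\tr_\A(\rho)) = \log n$ and then use that $\tr_\A(\rho)$ has rank at most $n$, so this value is attained only when its spectrum is uniform on exactly $n$ eigenvalues. The only difference is packaging. You start from the Schmidt decomposition and absorb the basis $a_i$ into the second factor via the transpose identity. The paper instead uses the operator--vector correspondence to write every pure state as $\frac1n\sum_{i,j}E_{ij}\tensor VE_{ij}V^\dagger$ for a general operator $V$, so that $\tr_\A(\rho)=\frac1n VV^\dagger$ and condition (1) becomes directly the statement that $V$ is an isometry.
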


\begin{proof}
Because the state $\rho$ is pure, $\tilde S(\rho) = \log m + \log n$, and there exists an operator $V\: \CC^n \to \CC^m$, which is unique up to a phase factor, such that
$$
\rho = \frac 1 n \sum_{i,j = 1}^n E_{ij} \tensor V E_{ij} V^\dagger.
$$
The existence and uniqueness of the operator $V$ follows from the operator-vector correspondence \cite[sec.~1.1.2]{Watrous2018}, which is easily seen in the graphical calculus \cite{Selinger2011}. Furthermore,
$$
\tr_\A(\rho) = \frac 1 n \sum_{i = 1}^n V E_{ii} V^\dagger = \frac 1 n V V^\dagger.
$$
If $V$ is an isometry, then $V V^\dagger$ is a rank-$n$ projection, so $S(\tr_\A(\rho)) = \log n$ and $\tilde S(\tr_\A(\rho)) = \log n + \log m = \tilde S(\rho)$. Therefore, $(2) \Rightarrow (1)$. Conversely, if $\tilde S(\tr_\A(\rho)) = \tilde S (\rho)$, then $S(\tr_\A(\rho)) = \log n$. Since the state $\tr_\A(\rho)$ has rank no greater than $n$, this conclusion implies that $V V^\dagger$ is a rank-$n$ projection \cite[eq.~3.29]{Witten2020} and, hence, that $V$ is an isometry. Therefore, $(1) \Rightarrow (2)$.
\end{proof}

We finally assemble the pieces into a straightforward proof of one of our main results.

\begin{theorem}\label{3.8}
Let $\A$ and $\B$ be finite-dimensional $C^*$-algebras, and let $\rho \in \A \tensor \B$ be a state. Then, the following are equivalent:
\begin{enumerate}
\item $\tilde S (\tr_\A(\rho)) = \tilde S (\rho)$,
\item $\rho$ is right-Bell.
\end{enumerate}
\end{theorem}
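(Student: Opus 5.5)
The plan is to reduce both conditions to pure states supported on a single factor, using \cref{3.6} on the entropy side and \cref{2.5} together with \cref{2.3} on the Bell side. Then compare the two characterizations of pure states, \cref{3.7} and \cref{2.4}, which coincide.

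First, diagonalize $\rho = p_1\rho_1 + \cdots + p_k\rho_k$ with $p_i > 0$ summing to $1$ and pairwise-orthogonal pure states $\rho_i$. I would choose the diagonalization so that each $\rho_i$ is supported in a single factor $P\A \tensor Q\B \iso \M_n \tensor \M_m$, where $P$ and $Q$ are minimal central projections. This is possible because $\rho$ is block diagonal with respect to the central decomposition of $\A \tensor \B$, so we may diagonalize each block separately.

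For a pure state $\rho_i$ supported on $P\A \tensor Q\B$, both conditions are unchanged when we pass to that factor.
\begin{itemize}
\item \emph{Entropy condition.} The counting weight of $\A\tensor\B$ restricted to the block is $nm$, and that of $\B$ restricted to $Q\B$ is $m$. Also, $\tr_\A(\rho_i)$ lies in $Q\B$ and equals the partial trace computed inside $\M_n \tensor \M_m$. So $\tilde S$ of $\rho_i$ and of $\tr_\A(\rho_i)$ agree with the values computed in $\M_n\tensor\M_m$, by axiom (2) and isomorphism invariance.
\item \emph{Right-Bell condition.} If $\rho_i$ is right-Bell in $\A\tensor\B$, the third case of the proof of \cref{2.5} shows that it is right-Bell in $P\A\tensor Q\B$ (compress $Q$ to $QQ_1$). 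Conversely, given a projection $P' \in \A$, write $P' = PP' + (1-P)P'$. If $Q'$ works for $PP'$ in the block, then $Q'$ works for $P'$ in $\A\tensor\B$, since $((1-P)P'\tensor 1)\rho_i = 0$.
\end{itemize}
So for such a $\rho_i$, \cref{3.7} and \cref{2.4} show that both conditions are equivalent to the same isometry form. Hence $\tilde S(\tr_\A(\rho_i)) = \tilde S(\rho_i)$ iff $\rho_i$ is right-Bell.

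$(1)\Rightarrow(2)$: By \cref{3.6}, the states $\tr_\A(\rho_i)$ are pairwise-orthogonal and each satisfies $\tilde S(\tr_\A(\rho_i)) = \tilde S(\rho_i)$. By the reduction above, each $\rho_i$ is right-Bell. Then \cref{2.5}, direction $(2)\Rightarrow(1)$, gives that $\rho$ is right-Bell.

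$(2)\Rightarrow(1)$: Each $\rho_i$ is right-Bell by \cref{2.3}. By \cref{2.5}, direction $(1)\Rightarrow(2)$, the marginals $\tr_\A(\rho_i)$ are pairwise-orthogonal. By the reduction, each $\rho_i$ satisfies $\tilde S(\tr_\A(\rho_i)) = \tilde S(\rho_i)$. Then \cref{3.6} yields (1).

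The main obstacle is the bookkeeping in the reduction to a single factor: checking that the adjusted entropies and the right-Bell property are unchanged under restriction to the block. I would handle the entropies through the explicit form of $\zeta$ on each block, and the right-Bell property through the central-projection argument described above.
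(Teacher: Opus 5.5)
Your proposal is correct and takes essentially the same route as the paper. You diagonalize into pure states supported on single blocks, use \cref{3.6} on the entropy side and \cref{2.3} with \cref{2.5} on the Bell side, and reduce to $P_i\A \tensor Q_i\B$ to match \cref{3.7} with \cref{2.4}. The paper arranges this as a chain of three equivalences rather than two separate directions, but the ingredients and the block-reduction bookkeeping are the same.
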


\begin{proof}
Diagonalizing $\rho$, we obtain a decomposition $\rho = p_1 \rho_1 + \cdots + p_k \rho_k$, where $p_1, \ldots, p_k$ are positive real numbers that sum to $1$ and $\rho_1, \ldots, \rho_k$ are pairwise-orthogonal pure states. We now reason that
\begin{align*}
 \tilde S (\tr_\A(\rho)) = \tilde S (\rho) \EV &
\tr_\A(\rho_i) \perp \tr_\A(\rho_j) \text{ for all } i \neq j
\\ &
\text { and } \tilde S (\tr_\A(\rho_i)) = \tilde S (\rho_i) \text{ for all } i
\\ \EV & \tr_\A(\rho_i) \perp \tr_\A(\rho_j) \text{ for all } i \neq j
\\ &
\text { and } \rho_i \text{ is right-Bell for all } i
\\ \EV & \rho \text{ is right-Bell. }
\end{align*}
The first equivalence comes from \cref{3.6}. The third equivalence comes from \cref{2.3} and \cref{2.5}. It remains to justify the second equivalence.

Let $1 \leq i \leq k$. Let $P_i \in \A$ and $Q_i \in \B$ be the unique minimal central projections such that $\rho_i \in P_i \A \tensor Q_i \B$. We observe that $\rho_i$ is a right-Bell state in $\A \tensor \B$ iff it is a right-Bell state in $P_i \A \tensor Q_i \B$ because $(P \tensor 1_\B) \rho_i = (P P_i \tensor 1_\B) \rho_i$ and $(1_\A \tensor Q) \rho_i = (1_\A \tensor Q Q_i) \rho_i$ for all projections $P \in \A$ and $Q \in \B$. We also observe that the entropy $\tilde S(\rho_i)$ is the same in both $\A \tensor \B$ and $P_i \A \tensor Q_i \B$ and that, similarly, the entropy $\tilde S(\tr_\A(\rho_i))$ is the same in both $\B$ and $Q_i \B$. Since the factors $P_i\A$ and $Q_i\B$ are isomorphic to $\M_{n_i}$ and $\M_{m_i}$, respectively, for some positive integers $n_i$ and $m_i$, the second equivalence now follows from \cref{3.7} and \cref{2.4}. Having established all three equivalences, we have proved the theorem.
\end{proof}

In the special case that $\A = \M_n$ and $\B = \M_m$, \cref{3.8} can also be proved by combining \cref{2.6} with \cite[Theorem~1.3]{CarlenLieb2012}; the resulting proof is more opaque and not much shorter. \cref{3.8} has an obvious variant for Bell states, which we prove now.

\begin{corollary}\label{3.9}
Let $\A$ and $\B$ be finite-dimensional $C^*$-algebras, and let $\rho \in \A \tensor \B$ be a state. Then, the following are equivalent:
\begin{enumerate}
\item $\tilde S (\tr_\A(\rho)) = \tilde S (\rho) = \tilde S(\tr_\B(\rho))$,
\item $\rho$ is a Bell state.
\end{enumerate}
\end{corollary}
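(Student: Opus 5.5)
The plan is to reduce \cref{3.9} to two applications of \cref{3.8}, one for each side of the system. By \cref{2.2}, $\rho$ is Bell exactly when it is both right-Bell and left-Bell. \cref{3.8} already gives that $\rho$ is right-Bell iff $\tilde S(\tr_\A(\rho)) = \tilde S(\rho)$. It therefore remains to show that $\rho$ is left-Bell iff $\tilde S(\tr_\B(\rho)) = \tilde S(\rho)$. Once we have that, condition (1) is the conjunction of the two equalities, and the corollary follows at once.

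For the left-Bell statement, we use the swap. Let $\Sigma \: \A \tensor \B \to \B \tensor \A$ be the flip $\dagger$-isomorphism $a \tensor b \mapsto b \tensor a$, and set $\sigma = \Sigma(\rho)$, which is a state in $\B \tensor \A$.

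We first check that $\rho$ is left-Bell in $\A \tensor \B$ iff $\sigma$ is right-Bell in $\B \tensor \A$. Take projections $P \in \A$ and $Q \in \B$. Applying $\Sigma$ to condition (1) of \cref{2.1}, the equation $(P \tensor 1_\B)\rho = (1_\A \tensor Q)\rho$ holds iff $(1_\B \tensor P)\sigma = (Q \tensor 1_\A)\sigma$. Quantifying over $Q \in \B$ and asking for the existence of a suitable $P \in \A$ then matches \cref{2.2}(2) for $\rho$ with \cref{2.2}(1) for $\sigma$.

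Next we transport the entropies. We have $\tr_\B(\rho) = \tr_\B(\sigma)$ as an element of $\A$, where on the left we trace out the second factor of $\A \tensor \B$ and on the right we trace out the first factor of $\B \tensor \A$. Moreover, $\tilde S(\sigma) = \tilde S(\rho)$, because $\Sigma$ preserves the trace and the eigenvalues and carries $\zeta_{\A \tensor \B} = \zeta_\A \tensor \zeta_\B$ to $\zeta_\B \tensor \zeta_\A = \zeta_{\B \tensor \A}$. Applying \cref{3.8} to $\sigma \in \B \tensor \A$ therefore gives that $\sigma$ is right-Bell iff $\tilde S(\tr_\B(\rho)) = \tilde S(\rho)$.

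There is no real obstacle here. The only care needed is the bookkeeping: the partial traces must line up correctly under the flip, and $\zeta$ must be seen to be invariant under it. Both checks follow directly from the definitions.
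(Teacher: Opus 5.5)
Your proposal is correct and takes the same approach as the paper, which proves the corollary in one line from \cref{3.8} and \cref{2.2}. Your flip-isomorphism argument spells out the left-Bell half that the paper leaves implicit, and your bookkeeping of the partial traces and of $\zeta_{\A\tensor\B} = \zeta_\A \tensor \zeta_\B$ under the swap is accurate.
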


\begin{proof}
This equivalence follows from \cref{3.8} and \cref{2.2}.
\end{proof}

Thus, the condition $\tilde S (\tr_\A(\rho)) = \tilde S (\rho) = \tilde S(\tr_\B(\rho))$ characterizes Bell states in terms of adjusted von Neumann entropy. It also characterizes adjusted von Neumann entropy in terms of Bell states, as we now show.
For the remainder of this section, we make the finite-dimensional $C^*$-algebra for the entropy of a state explicit, writing $S(\A, \rho)$ for the von Neumann entropy of a state $\rho \in \A$, for example.

\begin{definition}\label{3.10}
A \emph{quantum entropy} is a $[0,\infty)$-valued function $H$ on pairs $(\A, \rho)$, where $\A$ is a finite-dimensional $C^*$-algebra and $\rho \in \A$ is a state, such that
\begin{enumerate}
\item $H(\A, \rho) = H (\B, \sigma)$ if there is a $\dagger$-isomorphism $\Phi\: \A \to \B$ with $\Phi(\rho) = \sigma$,
\item $H(\A \oplus \B, \rho \oplus 0) = H(\A, \rho)$,
\item $H (\A \tensor \B, \rho \tensor \sigma) = H (\A, \rho) + H(\B, \sigma)$,
\item $H(\A, p_1 \rho_1 + p_2 \rho_2) = p_1 H(\A, \rho_1) + p_2 H(\A, \rho_2) - p_1 \log p_1 - p_2 \log p_2$ if 
$p_1, p_2 \in (0,1]$ satisfy $p_1 + p_2 = 1$ and $\rho_1, \rho_2 \in \A$ satisfy $\rho_1 \perp \rho_2$.
\end{enumerate}
Each axiom implicitly assumes that both sides of the equation are defined.
\end{definition}

In the usual way, axiom 4 implies that
$$
H(\A, p_1 \rho_1 + \cdots + p_k \rho_k) = \sum_{i=1}^k p_i H(\A, \rho_i) - \sum_{i=1}^k p_i \log p_i
$$
whenever $\rho_1, \ldots, \rho_k \in \A$ are pairwise-orthogonal states and $p_1, \ldots, p_k \in (0,1]$ sum to $1$. We proceed to determine the general form of a quantum entropy, which is given in \cref{3.15}, in effect, generalizing \cref{3.3}.

\begin{proposition}\label{3.11}
Let $\A_1, \ldots, \A_s$ be finite-dimensional $C^*$-algebras, and let $\A = \A_1 \oplus \cdots \oplus \A_s$. Let $\rho = p_1 \rho_1 \oplus \cdots \oplus p_s \rho_s$ for states $\rho_i \in \A_i$ and scalars $p_i \in (0,1]$ such that $p_1 + \cdots + p_s = 1$. Then, for each quantum entropy $H$,
$$
H(\A,\rho) = \sum_{i=1}^s p_i H(\A_i, \rho_i) - \sum_{i=1}^s p_i \log p_i.
$$
\end{proposition}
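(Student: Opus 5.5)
The plan is to reduce \cref{3.11} to the extended grouping axiom (the $k$-term form of axiom~4 stated just after \cref{3.10}), applied inside the single algebra $\A$. For each $1 \leq i \leq s$, let
$$
\hat\rho_i = 0 \oplus \cdots \oplus 0 \oplus \rho_i \oplus 0 \oplus \cdots \oplus 0 \in \A,
$$
with $\rho_i$ placed in the $i$-th summand. Each $\hat\rho_i$ is a state in $\A$: it is positive, and its trace is $\tr(\rho_i) = 1$. Moreover $\rho = p_1 \hat\rho_1 + \cdots + p_s \hat\rho_s$. The states $\hat\rho_i$ are pairwise-orthogonal, since $\hat\rho_i \hat\rho_j = 0$ for $i \neq j$ because they are supported in different direct summands. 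The $k$-term grouping formula therefore gives
$$
H(\A, \rho) = \sum_{i=1}^s p_i H(\A, \hat\rho_i) - \sum_{i=1}^s p_i \log p_i .
$$

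It then remains to show that $H(\A, \hat\rho_i) = H(\A_i, \rho_i)$ for each $i$. For this I will use axiom~1 and axiom~2 together. Let
$$
\A_i' = \bigoplus_{j \neq i} \A_j .
$$
The permutation of summands $\A \to \A_i \oplus \A_i'$ is a $\dagger$-isomorphism, and it sends $\hat\rho_i$ to $\rho_i \oplus 0$. Axiom~1 therefore gives $H(\A, \hat\rho_i) = H(\A_i \oplus \A_i', \rho_i \oplus 0)$, and axiom~2 then gives $H(\A_i \oplus \A_i', \rho_i \oplus 0) = H(\A_i, \rho_i)$. The case $s = 1$ needs no argument, since then $p_1 = 1$ and the formula is immediate. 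Substituting these equalities into the displayed grouping formula yields the claim.

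The only delicate point is the $k$-term grouping formula itself, which the paper asserts ``in the usual way''. If I wanted to verify it rather than cite it, I would induct on $k$. Write
$$
\rho = p_1 \rho_1 + (1 - p_1)\,\sigma, \qquad \sigma = \sum_{i \geq 2} \frac{p_i}{1 - p_1}\,\rho_i .
$$
The state $\sigma$ is orthogonal to $\rho_1$, so axiom~4 splits off the first term. The induction hypothesis then evaluates $H(\A, \sigma)$. Finally, the standard identity
$$
-p_1 \log p_1 - (1 - p_1)\log(1 - p_1) - (1 - p_1)\sum_{i \geq 2} \frac{p_i}{1 - p_1} \log \frac{p_i}{1 - p_1} = -\sum_{i=1}^k p_i \log p_i
$$
recombines the Shannon terms. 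The case where some $p_i = 1$ is trivial, since all the other weights must vanish and are excluded by $p_i \in (0,1]$.
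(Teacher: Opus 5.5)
Your proposal is correct and follows the paper's proof. The paper likewise writes $\rho$ as the orthogonal mixture $\sum_i p_i \hat\rho_i$ of the embedded states, applies the $k$-term grouping formula, and then uses axiom~2 to replace $H(\A,\hat\rho_i)$ by $H(\A_i,\rho_i)$; your explicit appeal to axiom~1 to move the $i$-th summand into first position, and your inductive check of the $k$-term formula, just spell out steps the paper leaves implicit.
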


\begin{proof}
We apply axioms 2 and 4 to calculate that
$$
H(\A,\rho) = \sum_{i=1}^s p_i H(\A, \hat \rho_i) - \sum_{i=1}^s p_i \log p_i =\sum_{i=1}^s p_i H(\A_i, \rho_i) - \sum_{i=1}^s p_i \log p_i,
$$
where the $i$-th component of $\hat \rho_i \in \A$ is $\rho_i$ and its other components vanish.
\end{proof}

\begin{lemma}\label{3.12}
Let $\rho \in \M_n$ be a state. Then, for each quantum entropy $H$, $ H(\M_n, \rho) = S(\M_n, \rho) + H(\M_n, E_{11}). $
\end{lemma}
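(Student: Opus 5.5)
The plan is to diagonalize $\rho$ and reduce everything to pure states, then show that all pure states in $\M_n$ have the same entropy $H(\M_n, E_{11})$.

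First, write $\rho = \sum_{i=1}^k \lambda_i |e_i\>\<e_i|$ with $\{e_i\}$ orthonormal and only positive eigenvalues $\lambda_i$ kept. The rank-one projections $|e_i\>\<e_i|$ are pairwise-orthogonal states. By the $k$-fold form of axiom 4 stated after \cref{3.10},
$$
H(\M_n,\rho) = \sum_{i=1}^k \lambda_i H(\M_n, |e_i\>\<e_i|) - \sum_{i=1}^k \lambda_i \log \lambda_i .
$$
The last sum is exactly $S(\M_n,\rho)$, since the spectrum of $\rho$ is $\{\lambda_i\}$ together with zeros, which contribute nothing under the convention $0 \log 0 = 0$.

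Second, for each unit vector $e$ choose a unitary $U \in \M_n$ with $U e_1 = e$. The map $\Phi(X) = UXU^\dagger$ is a $\dagger$-automorphism of $\M_n$ sending $E_{11}$ to $|e\>\<e|$. Axiom 1 then gives $H(\M_n, |e\>\<e|) = H(\M_n, E_{11})$. Substituting this into the display, the coefficients $\lambda_i$ sum to $1$, and we obtain $H(\M_n,\rho) = S(\M_n,\rho) + H(\M_n,E_{11})$.

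I expect essentially no obstacle. The only points to check are that axiom 4 extends to $k$ summands by the usual induction, which the paper already notes, and that this extension applies with positive weights. Dropping the zero eigenvalues before invoking it ensures the weights are positive.
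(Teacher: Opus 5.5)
Your proposal is correct and is essentially the paper's proof. The paper first conjugates $\rho$ to $\sum_i p_i E_{ii}$ by axiom 1 and then applies the $k$-fold grouping axiom. You apply the grouping axiom directly to the eigenprojections and then use axiom 1 to identify each $H(\M_n, |e_i\rangle\langle e_i|)$ with $H(\M_n, E_{11})$, which makes explicit a step the paper leaves implicit, namely $H(\M_n, E_{ii}) = H(\M_n, E_{11})$. One notational point: in $Ue_1 = e$ you mean the first standard basis vector, not your first eigenvector, so rename one of them.
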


\begin{proof}
Diagonalizing $\rho$, we obtain a unitary matrix $U \in \M_n$ such that $U^\dagger \rho U = p_1 E_{11} + \cdots + p_k E_{kk}$ for some $p_1, \ldots, p_k \in (0,1]$ with $k \leq n$. We now appeal to axioms 1 and 4 to calculate that
\begin{align*}
H(\M_n,\rho) & = H(\M_n, U^\dagger \rho U) = \sum_{i=1}^k p_i H(\M_{n}, E_{ii}) - \sum_{i=1}^k p_i \log p_i
\\ & = \sum_{i=1}^k p_i H(\M_{n}, E_{11}) + S(\M_n, \rho) = H(\M_n,E_{11}) + S(\M_n, \rho).
\end{align*}
\end{proof}

We will soon show that every quantum entropy $H$ is of the form
$$
H(\A, \rho) = S(\A, \rho) + \tr(\xi^F_\A \rho)
$$
for the function $F\: n \mapsto H(\M_n, E_{11})$, where $\xi_\A^F \in \A$ is defined as follows.

\begin{definition}
Let $F\: \ZZ_+ \to [0,\infty)$ be a function, and let $\A$ be a finite-dimensional $C^*$-algebra with minimal central projections $P_1, \ldots, P_s \in \A$. For each $1 \leq i \leq s$, let $n_i = \sqrt{\dim P_i \A}$. We define
$$
\xi^F_\A = F(n_1) P_1 + \cdots + F(n_s) P_s \in \A.
$$
\end{definition}

Clearly, if $\A = \M_{n_1} \oplus \cdots \oplus \M_{n_s}$, then $\xi^F_\A = F(n_1) 1_{n_1} \oplus \cdots \oplus F(n_s) 1_{n_s} \in \A$.

\begin{lemma}\label{3.14}
Let $F\: \ZZ_+ \to [0,\infty)$ be a function. For all finite-dimensional $C^*$-algebras $\A$ and $\B$,
\begin{enumerate}
\item $\Phi(\xi^F_\A) = \xi^F_\B$ if there is a $\dagger$-isomorphism $\Phi\: \A\to \B$,
\item $\xi^F_{\A \oplus \B} = \xi^F_\A \oplus \xi^F_\B$,
\item $\xi^F_{\A \tensor \B} = \xi^F_\A \tensor 1_\B + 1_\A \tensor \xi^F_\B$ if $F(n_1 n_2) = F(n_1) + F(n_2)$ for all $n_1, n_2 \in \ZZ_+$.
\end{enumerate}
\end{lemma}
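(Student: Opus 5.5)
The plan is to argue directly from the description of $\xi^F_\A$ in terms of minimal central projections, handling each of the three claims separately. The common tool is that minimal central projections, and the dimensions of the corresponding factors, are intrinsic to the algebraic and $\dagger$-structure.

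For claim 1, a $\dagger$-isomorphism $\Phi\: \A \to \B$ maps the center of $\A$ onto the center of $\B$. It also preserves projections and the order on them, so it sends the minimal central projections $P_1, \ldots, P_s$ of $\A$ bijectively onto the minimal central projections $\Phi(P_1), \ldots, \Phi(P_s)$ of $\B$. Moreover, $\Phi$ restricts to a linear isomorphism $P_i \A \to \Phi(P_i)\B$, so $\dim \Phi(P_i)\B = \dim P_i \A$ and the integers $n_i$ agree. Linearity of $\Phi$ then gives $\Phi(\xi^F_\A) = \sum_i F(n_i)\Phi(P_i) = \xi^F_\B$.

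For claim 2, the minimal central projections of $\A \oplus \B$ are exactly the elements $P \oplus 0$ and $0 \oplus Q$, where $P$ ranges over the minimal central projections of $\A$ and $Q$ over those of $\B$. This holds because the center of $\A \oplus \B$ is $Z(\A) \oplus Z(\B)$. We also have $(P\oplus 0)(\A\oplus\B) \iso P\A$, so the dimensions are unchanged, and summing gives $\xi^F_{\A\oplus\B} = \xi^F_\A \oplus \xi^F_\B$.

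Claim 3 is the only step with real content, namely identifying the minimal central projections of $\A \tensor \B$. Let $P_1,\dots,P_s$ be the minimal central projections of $\A$ with sizes $n_i$, and let $Q_1,\dots,Q_t$ be those of $\B$ with sizes $m_j$. I will show that the minimal central projections of $\A\tensor\B$ are the elements $P_i \tensor Q_j$, with $(P_i\tensor Q_j)(\A\tensor\B) = P_i\A \tensor Q_j\B \iso \M_{n_i}\tensor\M_{m_j} \iso \M_{n_i m_j}$. The cleanest route is to write $\A \tensor \B \iso \bigoplus_{i,j} \M_{n_i}\tensor\M_{m_j}$ by distributivity of the tensor product over direct sums. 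Each summand is simple, so its unit $P_i\tensor Q_j$ is a minimal central projection, and these units exhaust the minimal central projections. Hence $\xi^F_{\A\tensor\B} = \sum_{i,j} F(n_i m_j)\,P_i\tensor Q_j$. Applying the hypothesis $F(n_im_j) = F(n_i)+F(m_j)$ splits this sum into $\sum_{i,j}F(n_i)P_i\tensor Q_j + \sum_{i,j}F(m_j)P_i\tensor Q_j$. Using $\sum_j Q_j = 1_\B$ and $\sum_i P_i = 1_\A$, the two sums are $\xi^F_\A\tensor 1_\B$ and $1_\A\tensor\xi^F_\B$, respectively. The one point needing care is that every central projection of $\A\tensor\B$ is a sum of the $P_i\tensor Q_j$, and this follows from $Z(\A\tensor\B) = Z(\A)\tensor Z(\B)$, or directly from the simplicity of each block.
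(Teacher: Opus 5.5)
Your proposal is correct and follows essentially the paper's route: claim 1 is proved exactly as in the paper, via the bijection of minimal central projections under $\Phi$ and the equality $\dim \Phi(P)\B = \dim P\A$. For claims 2 and 3, the paper uses claim 1 to reduce to the case where $\A$ and $\B$ are concrete direct sums of full matrix algebras and leaves the computation implicit. You instead carry out that computation intrinsically, identifying the minimal central projections as $P \oplus 0$ and $0 \oplus Q$ for claim 2, and as $P_i \tensor Q_j$ with blocks of size $n_i m_j$ for claim 3; this is the same argument made explicit.
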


\begin{proof}
Claim 1 follows from the fact that $\Phi(P)$ is a minimal central projection in $\B$ iff $P$ is a minimal central projection in $\A$ and that, in this case, $\dim \Phi(P) \B = \dim \Phi (P \A) = \dim P \A$. Claims 2 and 3 now follow from the special case in which $\A$ and $\B$ are both direct sums of full matrix algebras.
\end{proof}

\begin{theorem}\label{3.15}
A quantum entropy $H$ is uniquely determined by the numbers $H(n) \coloneqq H(\M_n, E_{11})$, for $n \geq 1$, which satisfy $H(n_1 n_2) = H(n_1) + H(n_2)$. Furthermore, for any function $F\: \ZZ_+ \to [0, \infty)$, if $F (n_1 n_2) = F(n_1) + F(n_2)$, then there exists a quantum entropy $H$ such that $H(n) = F(n)$ for all $n \geq 1$. Explicitly, $H(\A, \rho) = S(\A, \rho) + \tr(\xi^F_\A \rho)$.
\end{theorem}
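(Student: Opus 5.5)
The proof splits into three parts: a uniqueness formula, multiplicativity of $H(n)$, and existence.

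\emph{Uniqueness.} Let $H$ be a quantum entropy and $\rho \in \A$ a state. Using axiom 1, I will assume $\A = \M_{n_1} \oplus \cdots \oplus \M_{n_s}$. Write $\rho = p_1\rho_1 \oplus \cdots \oplus p_s\rho_s$ and discard the summands with $p_i = 0$ using axiom 2 (or keep them with coefficient zero). \cref{3.11} then gives
$$H(\A,\rho) = \sum_i p_i H(\M_{n_i},\rho_i) - \sum_i p_i \log p_i .$$
By \cref{3.12}, $H(\M_{n_i},\rho_i) = S(\M_{n_i},\rho_i) + H(n_i)$. The grouping identity for von Neumann entropy (\cref{3.2} with orthogonal summands) recombines the terms $\sum_i p_i S(\rho_i) - \sum_i p_i\log p_i$ into $S(\A,\rho)$. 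The remaining term $\sum_i p_i H(n_i)$ equals $\tr(\xi^H_\A \rho)$. Hence $H(\A,\rho) = S(\A,\rho) + \tr(\xi^H_\A\rho)$, so $H$ is determined by the numbers $H(n)$.

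\emph{Multiplicativity.} Apply axiom 3 to $E_{11} \tensor E_{11} \in \M_{n_1}\tensor\M_{n_2}$. Under the standard $\dagger$-isomorphism $\M_{n_1}\tensor\M_{n_2} \iso \M_{n_1n_2}$, this state goes to a rank-one projection. Any rank-one projection is unitarily conjugate to $E_{11}$, and conjugation is a $\dagger$-isomorphism, so axiom 1 gives $H(n_1n_2) = H(n_1) + H(n_2)$.

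\emph{Existence.} Given $F$ with $F(n_1 n_2) = F(n_1) + F(n_2)$, define $H(\A,\rho) = S(\A,\rho) + \tr(\xi^F_\A\rho)$. This is nonnegative since both terms are. I will check the axioms in turn.
\begin{enumerate}
\item Axiom 1 follows from \cref{3.14}(1), the unitary invariance of $S$, and trace preservation. A $\dagger$-isomorphism between finite-dimensional $C^*$-algebras preserves the unnormalized trace, because it permutes factors of equal size and acts by unitary conjugation on each.
\item Axiom 2 follows from \cref{3.14}(2) and the fact that $S(\rho\oplus 0) = S(\rho)$.
\item Axiom 3 uses additivity of $S$ on product states together with \cref{3.14}(3):
$$\tr((\xi_\A\tensor 1 + 1\tensor\xi_\B)(\rho\tensor\sigma)) = \tr(\xi_\A\rho) + \tr(\xi_\B\sigma).$$
\item Axiom 4 follows from the equality case of \cref{3.2} for orthogonal states, plus linearity of $\rho \mapsto \tr(\xi^F_\A\rho)$.
\end{enumerate}
Finally, $H(\M_n, E_{11}) = 0 + F(n) = F(n)$, as required.

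The main obstacle is bookkeeping rather than any deep step. The two points needing care are the passage from a general $\A$ to the standard direct-sum form, and trace preservation under $\dagger$-isomorphisms in axiom 1.
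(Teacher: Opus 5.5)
Your proposal is correct and follows essentially the paper's route. Uniqueness goes through \cref{3.11} and \cref{3.12}, multiplicativity through axioms 3 and 1 applied to $E_{11}\tensor E_{11}$, and existence through \cref{3.14} plus the equality case of \cref{3.2}. The main difference is that you derive $H(\A,\rho) = S(\A,\rho) + \tr(\xi^H_\A\rho)$ directly for every quantum entropy, whereas the paper only shows that two entropies with the same $H(n)$ agree and obtains the formula by combining this with existence; you also check explicitly nonnegativity, trace preservation under $\dagger$-isomorphisms, and $H(n)=F(n)$, which the paper leaves implicit.
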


\begin{proof}
Let $H_1$ and $H_2$ be quantum entropies such that $H_1(n) = H_2(n)$ for all $n \geq 1$. It follows by \cref{3.12} that $H_1(\M_n, \rho) = H_2(\M_n, \rho)$ for all $n \geq 1$ and all states $\rho \in \M_n$. It then follows by \cref{3.11} that $H_1(\A, \rho) = H_2(\A,\rho)$ for all finite-dimensional $C^*$-algebras $\A$ and all states $\rho \in \A$. Thus, $H_1 = H_2$. We apply axiom 3 to calculate that for any quantum entropy $H$, $$H(n_1 n_2) = H(\M_{n_1 n_2}, E_{11}) = H(\M_{n_1} \tensor \M_{n_2}, E_{11} \tensor E_{11}) = H(n_1) + H(n_2)$$ for all $n_1, n_2 \geq 1$, establishing the first claim of the theorem.

Let $F \: \ZZ_{+} \to [0, \infty)$ be a function such that $F (n_1 n_2) = F(n_1) + F(n_2)$ for all $n_1, n_2 \in \ZZ_+$. For each finite-dimensional $C^*$-algebra $\A$ and each state $\rho \in \A$, let $H(\A, \rho) = S(\A, \rho) + \tr(\xi^F_\A \rho)$. We prove that $H$ is a quantum entropy. First, we apply \cref{3.14} to verify that the function $(\A,\rho) \mapsto \tr(\xi^F_\A \rho)$ satisfies axioms 1, 2, and 3 of \cref{3.10}. For axiom 1, we reason that
$$
\tr(\xi^F_\A \rho) = \tr(\Phi(\xi^F_\A \rho)) = \tr(\Phi(\xi^F_\A)\Phi(\rho)) = \tr(\xi^F_\B \sigma).
$$
For axiom 2, we reason that
$$
\tr(\xi^F_{\A \oplus \B}(\rho \oplus 0)) = \tr((\xi^F_\A \oplus \xi^F_\B)(\rho \oplus 0)) = \tr(\xi^F_\A \rho \oplus 0) = \tr(\xi^F_\A \rho).
$$
For axiom 3, we reason that
$$
\tr(\xi^F_{\A \tensor \B}(\rho \tensor \sigma)) = \tr( \xi^F_\A \rho \tensor \sigma) + \tr(\rho \tensor \xi^F_\B \sigma) = \tr( \xi^F_\A \rho) + \tr( \xi^F_\B \sigma).
$$
Axioms 1, 2, and 3 are certainly also satisfied by von Neumann entropy $S$ and are preserved by addition, so $H$ satisfies these axioms as well. Similarly, the quantity $\tr(\xi^F_\A \rho)$ is clearly affine in $\rho$, so $H$ also satisfies axiom 4. Therefore, $H$ is a quantum entropy, as claimed.
\end{proof}

The basic content of \cref{3.15} is that a quantum entropy $H$ is uniquely determined by a function $F \: \ZZ_+ \to [0, \infty)$ satisfying $F(n_1 n_2) = F(n_1) + F(n_2)$. Because $F$ and $H$ have disjoint domains, we can write $H$ for the corresponding function $F$ without ambiguity. Thus, $S(n) = 0$ and $\tilde S(n) = \log n$ for all $n \in \ZZ_+$. We now introduce three properties of quantum entropies and characterize them in terms of von Neumann entropy. All three characterizations are corollaries of \cref{3.15}.

\begin{definition}\label{3.16}
Let $H$ be a quantum entropy.
\begin{enumerate}
\item It is \emph{monotone} if $H(n) \leq H(m)$ whenever $n \leq m$.
\item It is \emph{Bell} if $H(\A, \tr_\B(\rho)) = H(\A \tensor \B, \rho)$ for all Bell states $\rho \in \A \tensor \B$.
\item It is \emph{monotonic} if $H(\A, \tr_\B(\rho)) \leq H(\A \tensor \B, \rho)$ for all states $\rho \in \A \tensor \B$.
\end{enumerate}
\end{definition}

The property that $H$ is monotone is quite weak conceptually, especially in light of \cref{3.15}. It formalizes the notion that the inherent uncertainty of a fully quantum system should never decrease when the dimension increases. This property should be compared to Shannon's second axiom \cite[sec.~6]{Shannon1948}.

The property that $H$ is Bell formalizes the notion that the two parts of a bipartite system in a Bell state should contain the same information. Similarly, the property that $H$ is monotonic formalizes the notion that each part of a bipartite system in any state contains no more information than the two parts together. We emphasize that the terms ``monotone'' and ``monotonic'' denote different properties, with the former term coming from pure mathematics and the latter term coming from statistical mechanics.

We now prove three corollaries of \cref{3.15}, characterizing the three properties defined in \cref{3.16} in terms of von Neumann entropy and adjusted von Neumann entropy.

\begin{corollary}
Let $H$ be a quantum entropy. The following are equivalent:
\begin{enumerate}
\item $H$ is monotone,
\item $H(\A, \rho) = S(\A, \rho) + \gamma \tr(\rho \log \zeta_\A)$ for some constant $\gamma \geq 0$.
\end{enumerate}
The monotone entropies $S$ and $\tilde S$ correspond to $\gamma = 0$ and $\gamma = 1$, respectively.
\end{corollary}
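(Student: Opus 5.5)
By \cref{3.15}, a quantum entropy $H$ is determined by the function $n \mapsto H(n) = H(\M_n, E_{11})$. This function takes values in $[0,\infty)$ and is completely additive, $H(n_1 n_2) = H(n_1) + H(n_2)$, and conversely every such function arises, via $H(\A,\rho) = S(\A,\rho) + \tr(\xi^H_\A \rho)$. I would therefore reduce the corollary to a statement about functions $F\: \ZZ_+ \to [0,\infty)$. Such an $F$ is nondecreasing iff $F(n) = \gamma \log n$ for some $\gamma \geq 0$. Once this is established, observe that $\xi^F_\A$ with $F = \gamma \log$ equals $\gamma \log \zeta_\A$, since on each summand $\M_{n_i}$ both are $\gamma \log n_i \cdot 1_{n_i}$. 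Hence $\tr(\xi^F_\A\rho) = \gamma \tr(\rho \log \zeta_\A)$, which is the displayed form in (2). The identifications $\gamma = 0$ for $S$ and $\gamma = 1$ for $\tilde S$ are then read off from $S(n) = 0$ and $\tilde S(n) = \log n$.

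The implication $(2) \Rightarrow (1)$ is immediate. The formula gives $H(n) = S(\M_n, E_{11}) + \gamma \tr(E_{11} \log(n 1_n)) = \gamma \log n$, which is nondecreasing because $\gamma \geq 0$.

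For $(1) \Rightarrow (2)$, I would run the classical Erdős-type argument for monotone completely additive functions. First, $H(1) = H(1\cdot 1) = 2H(1)$ gives $H(1) = 0$, and additivity gives $H(n^k) = k H(n)$. Fix $n, m \geq 2$ and $k \geq 1$, and choose $j$ with $m^j \leq n^k < m^{j+1}$. Monotonicity then yields $j H(m) \leq k H(n) \leq (j+1) H(m)$, while taking logarithms yields $j \log m \leq k \log n < (j+1)\log m$. Dividing by $k \log m$ and letting $k \to \infty$ gives $j/k \to \log n / \log m$, and therefore $H(n)/\log n = H(m)/\log m$. Call this common value $\gamma$. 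It is nonnegative because $H$ takes values in $[0,\infty)$, and $H(1) = 0 = \gamma \log 1$ covers $n = 1$. Finally, I would invoke the uniqueness part of \cref{3.15}: since $H$ and the entropy $S + \gamma \tr(\cdot \log \zeta)$ agree on all $H(n)$, they coincide.

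The main obstacle is the squeezing argument. It is elementary, but it needs $m \geq 2$, so that $\log m > 0$ and $j$ is well defined, and it needs care with the limit. Everything else is bookkeeping with \cref{3.15} and the identity $\xi^{\gamma\log}_\A = \gamma \log \zeta_\A$, which I would note follows directly from the definitions.
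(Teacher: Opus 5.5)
Your proposal is correct and follows the paper's proof. Both reduce via \cref{3.15} to the function $n \mapsto H(n)$, show that this function is $\gamma \log n$, identify $\xi^H_\A = \gamma \log \zeta_\A$, and compute $H(n) = \gamma \log n$ for the converse. The only difference is that you prove the characterization of monotone completely additive functions directly by the squeezing argument, which is valid here because $H(n_1 n_2) = H(n_1) + H(n_2)$ holds for all $n_1, n_2$, whereas the paper simply cites Erd\H{o}s and Khinchin for it.
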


\begin{proof}
Assume that $H$ is monotone. Thus, as a function $\ZZ_+ \to [0,\infty)$, $H$ is monotone and satisfies $H(n_1n_2) = H(n_1) + H(n_2)$ by \cref{3.15}, which implies that $H(n) = \gamma \log n$ for some $\gamma \in [0, \infty)$ \cite{Erdos1946, Khinchin1957}. For each finite-dimensional $C^*$-algebra $\A$ with minimal central projections $P_1, \ldots, P_s \in \A$, we then compute that
\begin{align*}
\xi^H_\A & = H(n_1) P_1 + \cdots + H(n_s) P_s = (\gamma \log n_1) P_1 + \cdots + (\gamma \log n_s) P_s \\ & = \gamma \log (n_1 P_1 + \cdots + n_s P_s) = \gamma \log \zeta_\A.
\end{align*}
Therefore, by \cref{3.15},
\begin{align*}
H(\A, \rho) & = S(\A, \rho) + \tr(\xi^H_\A \rho) = S(\A, \rho) + \tr(\gamma (\log \zeta_\A)\rho) \\ & = S(\A, \rho) + \gamma \tr(\rho \log \zeta_\A).
\end{align*}

Conversely, if $H(\A, \rho) = S(\A, \rho) + \gamma \tr(\rho \log \zeta_\A)$ for some $\gamma \geq 0$, then $H$ is monotone because
\begin{align*}
H(n) = H(\M_n, E_{11}) & = S(\M_n, E_{11}) + \gamma\tr(E_{11} \log \zeta_{\M_n})
\\ & = 0 + \gamma \tr(E_{11} (\log n) 1_n) = \gamma \log n.
\end{align*}
The quantum entropies $S$ and $\tilde S$ are explicitly given by $\gamma =0$ and $\gamma = 1$, respectively, and hence are monotone.
\end{proof}

\begin{corollary}\label{3.18}
Let $H$ be a quantum entropy. The following are equivalent:
\begin{enumerate}
\item $H$ is Bell,
\item $H(\A,\rho) = \tilde S (\A, \rho)$.
\end{enumerate}
\end{corollary}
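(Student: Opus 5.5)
By \cref{3.15}, a quantum entropy $H$ is determined by the function $n \mapsto H(n) = H(\M_n, E_{11})$. Moreover $\tilde S(n) = \log n$. So I plan to show that $H$ is Bell iff $H(n) = \log n$ for all $n \geq 1$.

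\emph{Backward direction.} If $H = \tilde S$, then for every Bell state $\rho \in \A \tensor \B$ we have $\tilde S(\A, \tr_\B(\rho)) = \tilde S(\A \tensor \B, \rho)$. This is \cref{3.9}, applied with the roles of the two factors interchanged via the flip $\dagger$-isomorphism $\A \tensor \B \iso \B \tensor \A$ and axiom 1. That isomorphism sends Bell states to Bell states, since \cref{2.2} is symmetric between left and right.

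\emph{Forward direction.} Assume $H$ is Bell and fix $n$. I would test the Bell property on the standard Bell state
$$\beta_n = \frac 1 n \sum_{i,j=1}^n E_{ij} \tensor E_{ij} \in \M_n \tensor \M_n,$$
which is Bell by \cref{2.7} with $U = 1_n$.

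On the whole system, $\beta_n$ is a pure state in $\M_n \tensor \M_n \iso \M_{n^2}$. Hence by axiom 1 and \cref{3.12},
$$H(\M_n \tensor \M_n, \beta_n) = S(\beta_n) + H(n^2) = 2H(n),$$
using multiplicativity from \cref{3.15}.

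On the part, the marginal is $\tr_\B(\beta_n) = \frac 1 n 1_n$. By \cref{3.12} again,
$$H(\M_n, \tfrac 1 n 1_n) = \log n + H(n).$$

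The Bell property equates these two quantities, so $2H(n) = \log n + H(n)$, which gives $H(n) = \log n = \tilde S(n)$. The uniqueness part of \cref{3.15} then yields $H = \tilde S$.

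I expect no real obstacle. The only care needed is justifying that the identification $\M_n \tensor \M_n \iso \M_{n^2}$ carries the rank-one projection $\beta_n$ to a unitary conjugate of $E_{11}$. This ensures that \cref{3.12} applies with the value $H(n^2)$.
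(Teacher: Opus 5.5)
Your proposal is correct and follows the paper's proof: test the Bell property on $\beta_n$, obtain $2H(n) = \log n + H(n)$ via \cref{3.12}, and conclude by the uniqueness in \cref{3.15}; the converse comes from the Bell-state characterization, and your appeal to \cref{3.9} already gives $\tilde S(\tr_\B(\rho)) = \tilde S(\rho)$ directly, so the flip is superfluous. The only cosmetic difference is that you evaluate $H(\M_n \tensor \M_n, \beta_n)$ through $\M_{n^2}$ and $H(n^2) = 2H(n)$, whereas the paper moves $E_{11} \tensor E_{11}$ to $\beta_n$ by axiom 1 and splits it by axiom 3.
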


\begin{proof}
Assume that $H$ is Bell. For each $n \geq 1$, let $\beta_n \in \M_n \tensor \M_n$ be the standard Bell state $ \beta_n = \frac 1 n \sum_{i, j =1}^n E_{ij} \tensor E_{ij}. $ We calculate that
\begin{align*}
2 H(n)
& =
H(\M_n, E_{11}) + H(\M_n, E_{11})
=
H(\M_n \tensor \M_n, E_{11} \tensor E_{11})
\\ & = \textstyle
H(\M_n \tensor \M_n, \beta_n)
=
H(\M_n, (\id \tensor \tr)(\beta_n))
=
H(\M_n, \frac 1 n 1_n)
\\ & = \textstyle
S(\M_n, \frac 1 n 1_n) + H(\M_n, E_{11})
= \log n + H(n),
\end{align*}
concluding that $H(n) = \log n$ for all $n$. Here, the second equality holds by axiom 3; the third equality holds by axiom 1; the fourth equality holds by \cref{3.16}(2); the sixth equality holds by \cref{3.12}. By \cref{3.15}, $H(\A,\rho) = \tilde S(\A,\rho)$ for all finite-dimensional $C^*$-algebras $\A$ and all states $\rho \in \A$. Therefore, $(1) \Rightarrow (2)$. The implication $(2) \Rightarrow (1)$ holds by \cref{3.8}.
\end{proof}

\begin{corollary}\label{3.19}
Let $H$ be a quantum entropy. The following are equivalent:
\begin{enumerate}
\item $H$ is monotonic,
\item $H(\A,\rho) \geq \tilde S (\A, \rho)$.
\end{enumerate}
\end{corollary}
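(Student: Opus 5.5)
The plan is to reduce everything to the one-variable functions $n \mapsto H(n)$ via \cref{3.15}, and to read condition (2) as the pointwise inequality $H(\A,\rho) \geq \tilde S(\A,\rho)$ for all $(\A,\rho)$. By \cref{3.15}, $H(\A,\rho) = S(\A,\rho) + \tr(\xi^H_\A \rho)$. The adjusted entropy is the case $\tilde S(n) = \log n$, since $\xi^{\log}_\A = \log \zeta_\A$. Define $G(n) = H(n) - \log n$. Then $G$ satisfies $G(n_1 n_2) = G(n_1) + G(n_2)$, and
$$H(\A,\rho) = \tilde S(\A,\rho) + \tr(\xi^G_\A \rho)$$
whenever $G$ takes values in $[0,\infty)$. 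This decomposition drives both directions.

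For $(1) \Rightarrow (2)$, I would reuse the computation from the proof of \cref{3.18}, with the Bell equality replaced by the monotonicity inequality. Let $\beta_n \in \M_n \tensor \M_n$ be the standard Bell state. Axioms 3 and 1 give $H(\M_n \tensor \M_n, \beta_n) = H(\M_n \tensor \M_n, E_{11} \tensor E_{11}) = 2H(n)$. \cref{3.12} gives $H(\M_n, \tfrac1n 1_n) = \log n + H(n)$. Monotonicity of $H$ then yields $\log n + H(n) \leq 2H(n)$, that is, $H(n) \geq \log n$, so $G(n) \geq 0$ for all $n$. Consequently $\xi^G_\A$ is a positive operator, and $\tr(\xi^G_\A \rho) \geq 0$ for every state $\rho$. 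By the decomposition above, $H(\A,\rho) \geq \tilde S(\A,\rho)$.

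For $(2) \Rightarrow (1)$, I would first apply (2) to $(\M_n, E_{11})$, which gives $H(n) \geq \tilde S(n) = \log n$, so $G \geq 0$ and the decomposition applies. Let $\rho \in \A \tensor \B$ be a state. Using \cref{3.14}(3) for the additive function $G$, we have $\xi^G_{\A\tensor\B} = \xi^G_\A \tensor 1_\B + 1_\A \tensor \xi^G_\B$. Also $\tr((\xi^G_\A \tensor 1_\B)\rho) = \tr(\xi^G_\A \tr_\B(\rho))$. Combining these,
$$H(\A\tensor\B,\rho) - H(\A,\tr_\B(\rho)) = \bigl[\tilde S(\rho) - \tilde S(\tr_\B(\rho))\bigr] + \tr((1_\A \tensor \xi^G_\B)\rho).$$
The second term is nonnegative because $\xi^G_\B \geq 0$. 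The first term is nonnegative by \cref{3.4}(1), applied after the flip $\dagger$-isomorphism $\A \tensor \B \iso \B \tensor \A$ and axiom 1, since \cref{3.4} is stated for $\tr_\A$. Hence $H$ is monotonic.

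I expect no serious obstacle. The only points needing care are the reinterpretation of (2) as forcing $H(n) \geq \log n$, and the observation that $G$ is again additive and nonnegative, so that \cref{3.14}(3) applies to it. It is then the monotonicity of $\tilde S$ from \cref{3.4} that carries the argument.
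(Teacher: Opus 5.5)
Your proposal is correct and follows essentially the same route as the paper. For $(1)\Rightarrow(2)$ you rerun the Bell-state computation from \cref{3.18} with the monotonicity inequality to get $H(n)\geq\log n$. For $(2)\Rightarrow(1)$ you combine \cref{3.4} with \cref{3.14}(3); the only cosmetic difference is that you split off the nonnegative additive function $G(n)=H(n)-\log n$ and use \cref{3.4}(1), whereas the paper uses \cref{3.4}(2) directly together with $\xi^H_\B\geq\log\zeta_\B$.
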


\begin{proof}
Assume that $H$ is monotonic. We reason as in the proof of \cref{3.18}, using the inequality $H(\M_n \tensor \M_n, \beta_n) \geq H(\M_n, (\id \tensor \tr)(\beta_n))$ to prove that $H(n) \geq \log n = \tilde S(n)$. It follows by \cref{3.15} that $H(\A, \rho) \geq \tilde S(\A,\rho)$ for all finite-dimensional $C^*$-algebras $\A$ and states $\rho \in \A$. Therefore, $(1) \Rightarrow (2)$.

Conversely, assume that $H(\A,\rho) \geq \tilde S (\A, \rho)$. It follows that $H(n) \geq \tilde S(n)$ for all $n \geq 1$, which implies that $\xi^H_\A \geq \xi^{\tilde S}_\A = \log \zeta_\A$ for all finite-dimensional $C^*$-algebras $\A$. We now reason that, by \cref{3.4}(2) and \cref{3.14}(3),
$$
S(\A, \tr_\B(\rho)) \leq S(\A \tensor \B, \rho) + \tr (\rho (1_\A \tensor \log \zeta_\B)) \leq S(\A \tensor \B, \rho) + \tr (\rho (1_\A \tensor \xi^H_\B)),
$$
$$
S(\A, \tr_\B(\rho)) + \tr(\rho(\xi^H_\A \tensor 1_\B)) \leq S(\A \tensor \B, \rho) + \tr(\rho(\xi^H_\A \tensor 1_\B)) + \tr (\rho (1_\A \tensor \xi^H_\B)),
$$
$$
S(\A, \tr_\B(\rho)) + \tr(\tr_\B(\rho) \xi^H_\A) \leq
S(\A \tensor \B, \rho) + \tr(\rho \xi^H_{\A \tensor \B} ),
$$
$$
H(\A, \tr_\B(\rho)) \leq
H(\A \tensor \B, \rho).
$$
Therefore, $(2) \Rightarrow (1)$.
\end{proof}

Let $ \rho \in \A = \A_1 \tensor \cdots \tensor \A_n $ be a multipartite state, and for each subset $T \subseteq \{1, \ldots, n\}$, let
$$
m_\rho(T) = H\left(\bigotimes_{i \in T} \A_i, (\varphi_1 \tensor \cdots \tensor \varphi_n)(\rho)\right)
$$
where $\varphi_i = \id$ or $\varphi_i = \tr$ according to whether $i \in T$ or $i \not \in T$. Thus, $m_\rho(T)$ is the entropy of the subsystem of $\A$ that is indicated by $T$. We call $H$ \emph{localizable} iff the function $m_\rho$ is an outer measure for every multipartite state $\rho$. Shannon entropy is certainly localizable in this sense, and von Neumann entropy is certainly not.

Notionally, the information in $\A$ that is quantified by a localizable quantum entropy $H$ is present in one or more of the parts $\A_1, \ldots, \A_n$. We cannot expect $m_\rho$ to be a measure in general because the parts of this system may contain redundant information. We cannot even expect $m_\rho$ to be the image of a measure via a binary relation due to other forms of information interaction; even Shannon entropy is not alternating of order infinity \cite{Choquet1955}.

It is straightforward to show that a quantum entropy $H$ is localizable iff it is both monotonic and \emph{subadditive} in the sense that
$$
H(\A \tensor \B, \rho) \leq H(\A, \tr_\B(\rho)) + H(\B, \tr_\A(\rho))
$$
for all bipartite states $\rho \in \A \tensor \B$. We have focused on the first condition because the second condition is automatic; every quantum entropy is subadditive, as we now show.

\begin{proposition}
Let $H$ be a quantum entropy. Let $\rho \in \A \tensor \B$ be a state, where $\A$ and $\B$ are finite-dimensional $C^*$-algebras. Then,
$$H (\A \tensor \B, \rho) \leq H (\A,\tr_\B(\rho)) + H(\B,\tr_\A(\rho)).$$
Furthermore, this is an equality iff $\rho = \rho_1 \tensor \rho_2$ for states $\rho_1 \in \A$ and $\rho_2 \in \B$.
\end{proposition}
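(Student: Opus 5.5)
By \cref{3.15}, every quantum entropy has the form $H(\A,\rho) = S(\A,\rho) + \tr(\xi^H_\A \rho)$ with $H(n_1 n_2) = H(n_1) + H(n_2)$. So the plan is to reduce the statement to subadditivity of von Neumann entropy together with linearity of the correction term.

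First I will show that the correction term is exactly additive under marginals. By \cref{3.14}(3), $\xi^H_{\A\tensor\B} = \xi^H_\A \tensor 1_\B + 1_\A \tensor \xi^H_\B$. Hence
$$\tr(\xi^H_{\A\tensor\B}\rho) = \tr(\rho(\xi^H_\A\tensor 1_\B)) + \tr(\rho(1_\A\tensor\xi^H_\B)) = \tr(\xi^H_\A \tr_\B(\rho)) + \tr(\xi^H_\B\tr_\A(\rho)),$$
where the last equality uses the defining property of the partial trace, as in the proof of \cref{3.4}. The claimed inequality and its equality condition therefore reduce to the same statements for $S$ itself:
$$S(\A\tensor\B,\rho) \le S(\A,\tr_\B(\rho)) + S(\B,\tr_\A(\rho)),$$
with equality iff $\rho$ is a product state.

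Second, I will establish subadditivity of $S$ on finite-dimensional $C^*$-algebras. This is not quite the textbook setting, since the trace here is the restriction of the matrix trace. I will express it through Umegaki relative entropy. Set $\rho_\A = \tr_\B(\rho)$ and $\rho_\B = \tr_\A(\rho)$, and regard $\A\tensor\B \subseteq \M_n\tensor\M_m$ with the trace compatible. Since $\log(\rho_\A\tensor\rho_\B) = \log\rho_\A\tensor 1 + 1\tensor\log\rho_\B$ on the support, the identity
$$S(\rho_\A) + S(\rho_\B) - S(\rho) = \tr\bigl(\rho(\log\rho - \log(\rho_\A\tensor\rho_\B))\bigr) = S(\rho\,\|\,\rho_\A\tensor\rho_\B)$$
holds. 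The support of $\rho$ lies under $\mathrm{supp}(\rho_\A)\tensor\mathrm{supp}(\rho_\B)$, so this quantity is finite. Klein's inequality, applied in the ambient matrix algebra, then gives $S(\rho\|\sigma)\ge 0$ with equality iff $\rho=\sigma$. This yields both the inequality and the fact that equality forces $\rho = \rho_\A\tensor\rho_\B$.

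Third, I will check the converse direction of the equality case. If $\rho=\rho_1\tensor\rho_2$, then $\tr_\B(\rho)=\rho_1$ and $\tr_\A(\rho)=\rho_2$, and axiom 3 gives equality directly.

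I expect the main obstacle to be the support bookkeeping in the relative-entropy identity, namely justifying $\tr(\rho\log(\rho_\A\tensor 1)) = \tr(\rho_\A\log\rho_\A)$ when $\rho_\A$ is singular. I will handle this by working on the compressed algebras $R_\A\A R_\A$ and $R_\B\B R_\B$, where $R_\A$ and $R_\B$ are the support projections. Here $(R_\A\tensor R_\B)\rho = \rho$ follows from the argument in the proof of \cref{2.5}, using $(1_\B - R_\B)\tr_\A(\rho) = 0$. On these compressed algebras the logarithms are honest, and Klein's inequality applies verbatim.
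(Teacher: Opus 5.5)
Your proposal is correct and follows the paper's proof. Like the paper, you use \cref{3.15} and \cref{3.14}(3) to split $\tr(\xi^H_{\A\tensor\B}\rho)$ exactly across the two marginals, which reduces the statement, equality case included, to subadditivity of $S$. The only difference is that the paper cites that von Neumann case from Nielsen--Chuang, while you reprove it via relative entropy and Klein's inequality. Since $\A\tensor\B\subseteq\M_n\tensor\M_m$ with compatible trace and partial traces, the textbook result applies verbatim, so your extra support bookkeeping is sound but optional.
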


\begin{proof}
This proposition is well known in the special case of von Neumann entropy \cite[sec.~11.3.4]{NielsenChuang2010}. For the general case, we calculate that
\begin{align*}
& H(\A \tensor \B, \rho)
 =
S(\A \tensor \B, \rho) + \tr(\xi^H_{\A \tensor \B} \rho)
\\ & \leq
S(\A, \tr_\B(\rho)) + S (\B, \tr_\A(\rho)) + \tr(\xi^H_{\A \tensor \B} \rho)
\\ & =
S(\A, \tr_\B(\rho)) + S (\B, \tr_\A(\rho)) + \tr((\xi^H_\A \tensor 1_\B) \rho) + \tr((1_\A \tensor \xi^H_\B) \rho)
\\ & =
S(\A, \tr_\B(\rho)) + S (\B, \tr_\A(\rho)) + \tr(\xi^H_\A \tr_\B(\rho)) + \tr(\xi^H_\B\tr_\A(\rho))
\\ & =
 H (\A,\tr_\B(\rho)) + H(\B,\tr_\A(\rho)).
\end{align*}
The first and last equalities hold by \cref{3.15}, and the inequality holds by the special case of von Neumann entropy. We also apply \cref{3.14}(3) to the operator $\xi^H_{\A \tensor \B}$ during the computation. This computation shows that we have equality iff $S(\A \tensor \B, \rho) = S(\A, \tr_\B(\rho)) + S (\B, \tr_\A(\rho))$, which is equivalent to the condition that $\rho = \rho_1 \tensor \rho_2$ for some states $\rho_1 \in \A$ and $\rho_2 \in \B$ by the special case of von Neumann entropy.
\end{proof}

\section*{Declarations}

\noindent \textbf{Competing interests.} The authors have no competing interests to declare.

\bigskip

\noindent \textbf{Data availability.} Data sharing is not applicable to this article as datasets were neither generated nor analyzed.

\bigskip

\noindent \textbf{Funding.} The research that is reported in this article was supported by the Air Force Office of Scientific Research under Award No.~FA9550-21-1-0041 and by the National Science Foundation under Award No.~DMS-2231414.

\bibliographystyle{plain}
\bibliography{refs.bib}

\Addresses

\end{document}